\DeclareMathOperator{\mad}{mad}
\newtheorem{theorem}{Theorem}
\newtheorem{lemma}[theorem]{Lemma}
\def\claimb{\vcenter\bgroup\advance\hsize by -8em\noindent
\refstepcounter{claimb}\ignorespaces\it}
\def\endclaimb{\rm\egroup\leqno(\theclaim)\global\@ignoretrue}
\noindent \emph{Proof.} {}{#1}{}}{\hfill
\newcounter{rulecnt}
\newcounter{confcnt}
\newcommand{\anonymous}[1]{#1}
\newcommand{\reftheorem}[1]{Theorem \ref{#1}}
\newcommand{\reflemma}[1]{Lemma \ref{#1}}
\begin{document}

\title{The planar edge coloring theorem of Vizing in $O(n\log n)$ time}

\anonymous{
\date{\today}

\author{Patryk Jędrzejczak and {\L}ukasz Kowalik\thanks{Institute of Informatics, University of Warsaw, Poland (\texttt{kowalik@mimuw.edu.pl}). This work is a part of project BOBR that has received funding from the European Research Council (ERC) under the European Union’s Horizon 2020 research and innovation programme (grant agreement No. 948057).}}
}

\maketitle

\anonymous{
\begin{textblock}{20}(-1.8, 8.3)
	\includegraphics[width=40px]{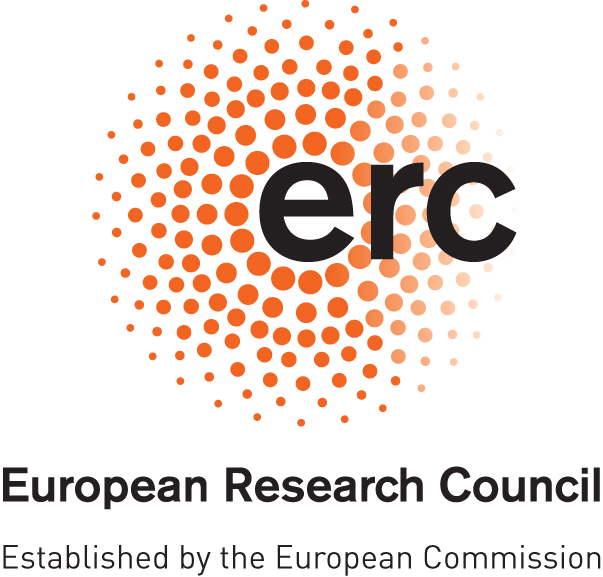}%
\end{textblock}
\begin{textblock}{20}(-2.05, 8.6)
	\includegraphics[width=60px]{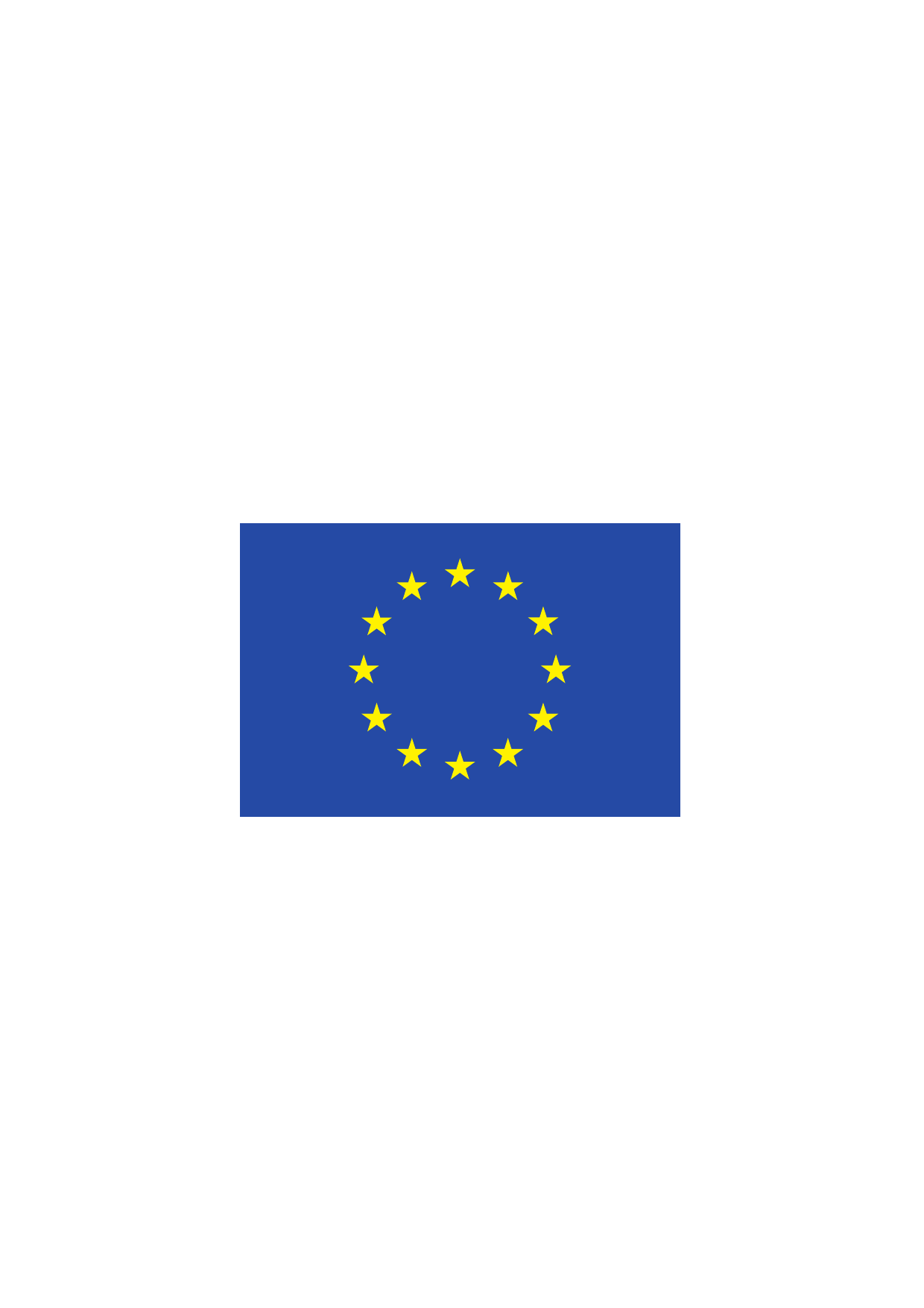}%
\end{textblock}
}

\begin{abstract}
In 1965, Vizing~\cite{vizing:critical} showed that every planar graph of maximum degree $\Delta\ge 8$ can be edge-colored using $\Delta$ colors.
The direct implementation of Vizing's proof gives an algorithm that finds the coloring in $O(n^2)$ time for an $n$-vertex input graph.
Chrobak and Nishizeki~\cite{chrobak-nishizeki} have shown a more careful algorithm, which improves the time to $O(n\log n)$, though only for $\Delta\ge 9$. 
In this paper we extend their ideas to get an algorithm also to the missing case $\Delta=8$. To this end, we modify the original recoloring procedure of Vizing.
This generalizes to bounded genus graphs.
\end{abstract}

\section{Introduction}

Edge-coloring of graph $G$ is an assignment of colors to the edges of $G$ such that incident edges receive different colors. The minimum number of colors that suffice to color graph $G$ is called the \emph{chromatic index} and denoted as $\chi'(G)$. Let $\Delta(G)$ denote the maximum degree among vertices of $G$. Throughout the paper, we write $\Delta$ when there is no ambiguity. For any graph $G$, it is clear that $\chi'(G) \geq \Delta(G)$. Also, Vizing's Theorem~\cite{vizing:first} states that $\chi'(G) \leq \Delta(G) + 1$. If $\chi'(G) = \Delta(G)$, then we say that $G$ is in Class 1, otherwise, that is if $\chi'(G) = \Delta(G)$ + 1, we say that $G$ is in Class 2. 

Determining whether $G$ is Class 1 is an NP-complete problem for general graphs, as shown by Holyer~\cite{holyer}. However, there are interesting variants of the edge-coloring problem that allow for efficient algorithms. Zhou, Nakano and Nishizeki~\cite{ZhouTreewidth} gave an algorithm that uses $\chi'(G)$ colors and runs in linear time for graphs with bounded treewidth. Several works focused on implementing Vizing's Theorem, i.e., $(\Delta + 1)$-edge-coloring~\cite{gabow,sinnamon,BhattacharyaCCS24,bhattacharya2024fasterdelta1edge}, culminating in the recent breakthrough of Assadi, Behnezhad, Bhattacharya, Costa, Solomon and Zhang~\cite{assadi2024vizingstheoremnearlineartime} with a randomized algorithm in time $O(|E(G)|\log\Delta(G))$.
Finally, there are natural graph classes that fit entirely into Class 1 and admit efficient algorithms for edge-coloring with $\Delta$ colors. 
Examples include the $O(m\log\Delta)$-time algorithm for bipartite graphs by Cole, Ost and Shirra~\cite{ColeOstSchirra} and the recent $O(n\log n)$-time algorithm of Kowalik~\cite{DBLP:conf/esa/Kowalik24} for graphs of bounded maximum average degree $\mad(G)$ where $\Delta\ge 2\mad(G)$.

\subsection{Planar graphs}

In this paper, we focus on edge-coloring of planar graphs. Vizing~\cite{vizing:critical} showed that all planar graphs with $\Delta \geq 8$ are in Class 1. He also gave examples of planar graphs with $\Delta \in \{2,3,4,5\}$ belonging to Class 2. Later, Sanders and Zhao~\cite{SandersZhaoDelta7} proved that planar graphs with $\Delta = 7$ are in Class 1. The only remaining case, $\Delta = 6$, remains open. One can turn the proof of Vizing into an $O(n^2)$ algorithm that finds a $\Delta$-edge-coloring of any planar graph with $\Delta \geq 8$. The same can be done with the proof of Sanders and Zhao, giving an $O(n^2)$ algorithm for the case $\Delta = 7$. Currently, these are the best known algorithms for $\Delta$-edge-coloring when $\Delta \in \{7, 8\}$, while more efficient algorithms are known when $\Delta \geq 9$. Chrobak and Nishizeki~\cite{chrobak-nishizeki} provided a $O(n \log n)$ algorithm for the case $\Delta \geq 9$. Later, Cole and Kowalik~\cite{cole-kowalik} used a different approach that gave an $O(n)$ algorithm. 

\subsection{Our result}
In this work, we present an $O(n \log n)$ algorithm for the missing case $\Delta = 8$ of Vizing's original result. Similarly as in the case of the work of Chrobak and Nishizeki~\cite{chrobak-nishizeki}, our algorithm applies to bounded genus graphs, in the sense that in time $O(n\log n)$ the algorithm colors the input graph $G$ using the optimal number of colors $\chi'(G)$.

\subsection{Vizing's proof and the implementation of Chrobak and Nishizeki}

To prove that any planar graph $G$ with $n$ vertices and $\Delta(G) \geq 8$ is in Class 1, Vizing~\cite{vizing:critical} showed that $G$ must contain edge $e$ that can be {\em reduced}, i.e., we can remove $e$ from $G$ and, after coloring $G \setminus \{e\}$ with $\Delta(G)$ colors, we can expand the obtained edge-coloring by $e$ without using more colors. 
In the Vizing's proof, edge $e$ needs to satisfy a condition that involves an upper bound on the number of degree $\Delta$ vertices in the neighborhood of one of its endpoints (see Theorem~\ref{val}). We call such an edge weak in this work.
The existence and reducibility of a weak edge allows for completing the proof of the Vizing's theorem by induction.
This corresponds to an iterative algorithm that begins with an empty edge-coloring and expands it edge by edge.
Coloring a new edge $e$ may require changing the colors of some of the previously colored edges. 
Namely, it may recolor edges incident to $e$ or a single path with alternating colors. The path is called a chain or an $(a,b)$-chain where $a$ and $b$ are the colors that alternate. 
Unfortunately, the chain can be of length $\Omega(n)$, and this is why the direct implementation takes quadratic time. 
Chrobak and Nishizeki found a workaround by reducing a subset of $\Omega(n)$ weak edges simultaneously in time $O(n)$, which leads to the $O(n \log n)$ algorithm.

In more detail, their argument has two ingredients.
First, they proved that if $\Delta(G) \geq 9$, there are $\Omega(n)$ weak edges. 
Second, they show that one can find a set $W$ of $\Omega(n)$ weak edges, which can be colored one by one {\em independently}, i.e., if for an edge $f\in W$ we define $E(f)$ as the set of all recolored edges while coloring an edge $f$ alone, then for any different edges $e_1, e_2 \in W$, there are no $e_1' \in E(e_1)$ and $e_2' \in E(e_2)$ which are the same edge or are incident to each other. 
In particular, this means that recoloring both $e_1$ and $e_2$ results in a proper edge-coloring of $G$, and the same holds if we recolor all the edges from $E_W=\bigcup_{f \in W}E(f)$. 
Moreover, since the sets $E(f)$ are pairwise disjoint, $\sum_{f\in W}|E(f)| = |E_W| \le |E(G)| = O(n)$, i.e., even though a single chain for an edge in $W$ may be of length $\Omega(n)$, all these chains have total length only $O(n)$. It is not hard to color a single edge of $E_W$ in $O(1)$ time, and thus coloring all edges in $W$ takes $O(n)$ time in total.

Let us now explain how the set $W$ is found.
The first step is to find $\Omega(n)$ weak edges that are 2-independent, i.e., not too close to each other: the distance between endpoints of different edges is greater than $2$.
Clearly, this is easy to achieve if $\Delta$ is upper-bounded by a constant, and we can indeed work under this assumption, because when $\Delta$ is large enough, edge-coloring can be found in linear time~\cite{ChrobakY89,cole-kowalik}.
Note that the 2-independency guarantees that the local changes in edge-coloring needed to reduce a weak edge do not interact with each other.
The goal of the next step is to exclude the situation where recoloring a chain of one edge of $W$ interacts with recoloring a chain of another edge. To this end, Chrobak and Nishizeki assign a so-called type to each chosen edge. Edge $e$ is of type $ab$ for different colors $a, b \in [\Delta]$ if the only chain that must be recolored to reduce $e$ is an $(a, b)$-chain. Then, we choose the largest set of edges of the same type, whose size is $\Omega(n)$ since the number of types is bounded. Finally, we filter out at most half of the chosen edges to ensure that each remaining edge requires a different chain to be recolored. $W$ consists of the remaining edges.

\subsection{Our contribution}

The algorithm of Chrobak and Nishizeki does not extend to the case $\Delta(G) = 8$ for one reason: then it is possible that graph $G$ has only $O(1)$ weak edges.
We introduce a new type of edge that can be reduced --- so-called butterfly-like edge.
In Section~\ref{discharging}, we prove using the discharging method that there are either $\Omega(n)$ weak edges or $\Omega(n)$ butterfly-like edges. In the first case, we can use the algorithm provided by Chrobak and Nishizeki. In the second case, our approach is similar. We also base our solution on reducing a fraction of all butterfly-like edges at once in time $O(n)$. The challenge is to show that such a fraction of edges always exists, how to find it, and how to reduce the edges efficiently.

One difficulty is that a butterfly-like edge is reduced by means of one of seven different cases, depending on the coloring of neighboring edges. Thus, controlling interactions between recolorings required to color a set of butterfly-like edges of the same type would require considering ${7 \choose 2}$ situations. We overcome this by incorporating the case into the type. Thus, the number of types is increased, but still bounded, and when we prove that edges in the chosen subset of butterfly-like edges (of the same type) can be colored independently, it is enough to focus on one case of the reducibility procedure at a time only, so $7$ instead of ${7 \choose 2}$ situations. The details are presented in Section~\ref{coloring-algorithm}.

\subsection{Terminology}

We use standard terminology and notation concerning graph theory throughout the paper. Here, we recall the notions used less often and define those specific to this work.

Let $v$ be a vertex of a graph $G$. The degree of $v$ is denoted by $d_G(v)$ or $d(v)$ when $G$ is clear. We call $v$ a $d$-vertex when $d(v) = d$. Similarly, we call a neighbor $w$ of $v$ a $d$-neighbor when $d(w) = d$. By $N_G(v)$, we mean the set of neighbors of $v$, and by $N_G(X)$ for some $X \subseteq V(G)$, we mean $\bigcup_{v \in X} N_G(v)$.

We define the distance between two vertices $v_1, v_2$ of a graph as the number of edges of the shortest path from $v_1$ to $v_2$. If $v_1$ and $v_2$ are in different components, the distance is $\infty$. The distance between vertex $v$ and edge $e$ is the smallest possible distance between $v$ and an endpoint of $e$. The distance between two edges $e_1, e_2$ is the smallest possible distance between an endpoint of $e_1$ and an endpoint of $e_2$. We also call two edges \emph{$k$-independent} if the distance between them is greater than $k$. Similarly, two sets of edges 
$E_1, E_2$ are $k$-independent if for every $e_1 \in E_1$ and $e_2 \in E_2$ edges $e_1, e_2$ are $k$-independent.

We represent edge-coloring of graph $G$ as a function $\pi : E(G) \to \mathbb{N}$ which assigns different values (called \emph{colors}) to incident edges. By \emph{$D$-edge-coloring}, we mean an edge-coloring that uses only colors from $[D] = \{k \in \mathbb{N} : 1 \leq k \leq D\}$. We also use the notion of \emph{partial $D$-edge-coloring}, which is a function $\pi : E(G) \to [D] \cup \{\bot\}$ where $\bot$ represents an uncolored edge and $\pi |_{\pi^{-1}([D])}$ is an edge-coloring.

For a graph $G$, its partial $D$-edge-coloring $\pi$, and $v \in V(G)$, we define the set of \emph{used} colors at $v$ as $\pi(v) = \{\pi(x) : x \in N_G(v)\} \setminus \{\bot\}$ and the set of \emph{free} colors at $v$ as $\bar{\pi}(v) = [D] \setminus \pi(v)$.

Given a partial $D$-edge-coloring $\pi$ of graph $G$ and two different colors $a, b \in [D]$, let $G_{ab}$ be the subgraph of $G$ induced by the edges colored either $a$ or $b$ with respect to $\pi$. $G_{ab}$ is a disjoint union of paths and cycles since its maximum degree is at most two. We call a component of $G_{ab}$ an \emph{$(a, b)$-chain}. If we know that an $(a, b)$-chain is a cycle, we sometimes call it an $(a, b)$-cycle. We call any subgraph of $G$ a \emph{chain} if it is an $(a, b)$-chain for some different colors $a, b$. We note that for any $(a, b)$-chain if we recolor all edges colored $a$ to $b$ and vice versa, we get a new partial $D$-edge-coloring of $G$. We call this operation \emph{kemping} and say that we \emph{kempe} the involved chain.

\subsection{Organization of the paper}

This paper is organized as follows. In Section~\ref{discharging}, we prove that every planar graph with $\Delta(G) \leq 8$ has $\Omega(n)$ weak edges or $\Omega(n)$ butterfly-like edges. Next, in Section~\ref{coloring-algorithm}, we describe an $O(n \log n)$ algorithm that finds an 8-edge-coloring of any planar graph with $\Delta(G) \leq 8$. Finally, in Section~\ref{further-research}, we suggest some related open problems.

\section{Linear lower bound for the number of reducible edges}
\label{discharging}

First, we formally define weak and butterfly-like edges. Let us recall the classic Vizing Adjacency Lemma.

\begin{theorem}
    \textnormal{(Vizing Adjacency Lemma, VAL~\cite{vizing:critical}).} Let $G$ be a simple graph and let $e = xy$ be an edge such that $x$ has at most $D - d(y) + [d(y) = D]$ neighbors of degree $D$. Then any partial $D$-edge-coloring of $G$ which colors a subset $E_c$ of edges of $G$, $e \not\in E_c$ can be expanded to a partial $D$-edge-coloring that colors $E_c \cup \{e\}$.
\label{val}
\end{theorem}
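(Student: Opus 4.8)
The plan is to argue by contradiction, using the classical Vizing fan and Kempe-chain machinery. First I would reduce to the clean case in which $\pi$ colors exactly $E(G)\setminus\{e\}$: pass to the subgraph $G'$ on the edge set $E_c\cup\{e\}$, which only lowers degrees, and check that the hypothesis is inherited (a degree-$D$ neighbour of $x$ in $G'$ is a degree-$D$ neighbour in $G$, and $D-d_{G'}(y)+[d_{G'}(y)=D]\ge D-d_G(y)+[d_G(y)=D]$). So from now on $\pi$ colors all edges of $G$ but $e=xy$, cannot be extended to $e$, and hence $\bar\pi(x)\cap\bar\pi(y)=\emptyset$; fix $\alpha\in\bar\pi(x)$, which is nonempty since $|\bar\pi(x)|=D-d(x)+1\ge 1$.

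Next I would build a maximal \emph{fan} at $x$: a sequence $y=y_0,y_1,\dots,y_k$ of distinct neighbours of $x$ with $xy_0=e$ uncolored and, for each $j\ge 1$, $\pi(xy_j)\in\bar\pi(y_i)$ for some $i<j$; fix one such index $i=\sigma(j)$ as the parent of $j$, so $\sigma$ makes $\{0,\dots,k\}$ into a tree rooted at $0$. Write $c_j=\pi(xy_j)$ and take the fan inclusionwise maximal. Two structural facts are needed: (i)~$\bar\pi(x)\cap\bar\pi(y_j)=\emptyset$ for every $j$, and (ii)~$\bigcup_{j}\bar\pi(y_j)\subseteq\{c_1,\dots,c_k\}$. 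Both follow from one recoloring move: if a color $\gamma$ were free at $x$ and at $y_j$, then shifting the fan along the parent-path from $y_j$ up to $y_0$ (giving each edge $xy_{\sigma(t)}$ the old color of $xy_t$ and uncoloring $xy_j$) yields a valid partial coloring in which $e$ is now colored, $xy_j$ is uncolored, the set of colors appearing at $x$ is unchanged, and $\gamma$ is still free at $x$ and at $y_j$; coloring $xy_j$ with $\gamma$ then expands $\pi$ to $e$, a contradiction. Given (i), fact (ii) follows from maximality: any color free at some $y_j$ and not among $c_1,\dots,c_k$ is either free at $x$, excluded by (i), or the color of some edge $xz$ with $z$ outside the fan, which would let us append $z$.

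The core of the argument — and the step I expect to be the main obstacle — is the disjointness statement: $\bar\pi(y_0),\dots,\bar\pi(y_k)$ are pairwise disjoint. I would prove it by a Kempe-chain argument. Suppose $\gamma\in\bar\pi(y_i)\cap\bar\pi(y_j)$ with $i<j$; note $\gamma\ne\alpha$ by (i), and $\alpha$ is present at every $y_m$. Shift the fan to uncolor $xy_i$; in the new coloring $\gamma$ is free at $y_i$, $\alpha$ is present at $y_i$ and free at $x$, and $x$ still has an edge of color $\gamma$. The $(\alpha,\gamma)$-chain starting at $y_i$ must reach $x$ — otherwise we kempe it, which makes $\alpha$ free at $y_i$ as well, so we may then color $xy_i$ with $\alpha$ — and it enters $x$ through a $\gamma$-colored edge determined by the fan structure. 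Doing the same after uncoloring $xy_j$ gives a second such chain. Now the two colorings agree on all edges not incident to $x$, each chain leaves its starting vertex through that vertex's unique $\alpha$-edge (which is not at $x$), and $y_i$ has a unique $\alpha$-edge in the $(\alpha,\gamma)$-subgraph; tracing the chains then forces them to coincide and, at the same time, to enter $x$ through different edges, a contradiction. The delicate bookkeeping is exactly which edge at $x$ each chain uses, which depends on the positions of the relevant vertices in the parent-pointer tree of the fan; this is the well-known technical heart of Vizing's theorem, and can also be quoted from a textbook treatment (e.g.\ Stiebitz--Scheide--Toft--Favrholdt).

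Finally I would close by counting. From (ii) and $\bar\pi(y_0)=\bar\pi(y)$ we get $k\ge|\bar\pi(y)|=D-d(y)+1$. Since every edge at $y_j$ with $j\ge 1$ is colored, $|\bar\pi(y_j)|=D-d(y_j)$, so pairwise disjointness together with (ii) gives $(D-d(y)+1)+\sum_{j=1}^{k}(D-d(y_j))\le k$. Each summand is nonnegative and at least $1$ whenever $d(y_j)<D$, so at most $k-(D-d(y)+1)$ of the indices $j\in\{1,\dots,k\}$ satisfy $d(y_j)<D$; hence at least $D-d(y)+1$ of the neighbours $y_1,\dots,y_k$ of $x$ have degree exactly $D$. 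If $d(y)<D$ this already contradicts the hypothesis that $x$ has at most $D-d(y)$ degree-$D$ neighbours. If $d(y)=D$, then at least one $y_j$ with $j\ge 1$ has degree $D$ and $y=y_0$ is a further degree-$D$ neighbour of $x$, so $x$ has at least two — again contradicting the hypothesis. Hence $\pi$ can always be expanded to color $e$, which is the claim.
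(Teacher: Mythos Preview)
The paper does not prove this theorem at all: it is stated as the classical Vizing Adjacency Lemma with a citation to Vizing's original paper and used as a black box, so there is no ``paper's own proof'' to compare against.

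Your argument is a correct outline of the standard fan/Kempe-chain proof of VAL. The reduction to the case $E_c=E(G)\setminus\{e\}$ is handled properly (the function $t\mapsto D-t+[t=D]$ is non-increasing on $\{1,\dots,D\}$, and a degree-$D$ neighbour in the subgraph is one in $G$), the fan construction and facts (i)--(ii) are right, and the final count is exactly the intended conclusion: from pairwise disjointness and $\bigcup_j\bar\pi(y_j)\subseteq\{c_1,\dots,c_k\}$ one gets at least $D-d(y)+1$ degree-$D$ vertices among $y_1,\dots,y_k$, which together with $y_0=y$ in the case $d(y)=D$ contradicts the hypothesis. The one place your write-up is genuinely thin is the disjointness step: your description (``the two chains coincide and enter $x$ through different edges'') is the right shape, but the bookkeeping of which $\gamma$-edge at $x$ survives each of the two fan shifts, and why the two $(\alpha,\gamma)$-chains must agree outside $N(x)$ while differing at $x$, needs the usual care (in particular one has to track what happens when the vertex carrying color $\gamma$ at $x$ lies on the parent-path being rotated). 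You flag this and point to Stiebitz--Scheide--Toft--Favrholdt, which is appropriate; just be aware that a referee would likely ask you to spell out this case analysis rather than defer to the reference.
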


In this work, we consider only the case $D = 8$ and graphs with $\Delta \leq 8$. An edge $xy$ is called \emph{$x$-weak} when it satisfies the assumption of VAL for $D = 8$, that is, $x$ has at most $8 - d(y) + [d(y) = 8]$ 8-neighbors. If $xy$ is \emph{$x$-weak} or \emph{$y$-weak}, then it is also \emph{weak}. Note that for graphs with $\Delta < 8$ every edge is weak.

Let $G$ be a simple graph containing a subgraph $B \subseteq G$ such that $B$ is isomorphic to $B_1$ or $B_2$, where $B_1$ and $B_2$ are defined as follows. $B_1$ (presented in Figure~\ref{butterfly_subgraph1}) satisfies:
\begin{itemize}
    \item $V(B)$ consists of different vertices $x, y, z, v_1, v_2, v_3$,
    \item $E(B)$ consists of edges $xy, xz, yv_1, yv_3, zv_1, zv_2$, and $xv_i$ for $i \in [3]$,
    \item $d_G(y) = d_G(z) = 3$, $d_G(x) = 8$, and $d_G(v_i) = 8$ for $i \in [3]$.
\end{itemize}

$B_2$ (presented in Figure~\ref{butterfly_subgraph2}) satisfies:
\begin{itemize}
    \item $V(B)$ consists of different vertices $x, y, z, v_1, v_2, v_3, v_4$,
    \item $E(B)$ consists of edges $xy, xz, yv_1, yv_4, zv_2, zv_3$, and $xv_i$ for $i \in [4]$,
    \item $d_G(y) = d_G(z) = 3$, $d_G(x) = 8$, and $d_G(v_i) = 8$ for $i \in [4]$.
\end{itemize}

\def\scalefig{0.7}

\begin{figure}[H]
\centering
\begin{minipage}{.35\textwidth}
\centering
\scalebox{\scalefig}{
\begin{tikzpicture}[scale=1.6]
\tikzstyle{whitenode}=[draw,circle,fill=white,minimum size=14pt,inner sep=0pt]
    \draw (2,-3) node[whitenode] (v3) [label=left:$v_3$] {8}
        -- ++(135:1cm) node[whitenode] (y) [label=left:$y$] {3}
        -- ++(30:1.63cm) node[whitenode] (v1) [label=above:$v_1$] {8}
        -- ++(-90:0.8cm) node[whitenode] (x) [label=below:$x$] {8}
        -- ++(0:1.41cm) node[whitenode] (z) [label=right:$z$] {3}
        -- ++(-135:1cm) node[whitenode] (v2) [label=right:$v_2$] {8};
    \draw (x) edge [] node [label=left:] {} (y);
    \draw (z) edge [] node [label=left:] {} (v1);
    \draw (x) edge [] node [label=left:] {} (v2);
    \draw (x) edge [] node [label=left:] {} (v3);
\end{tikzpicture}}
\caption{Subgraph $B_1$.}
\label{butterfly_subgraph1}
\end{minipage}%
\hspace{1cm}
\begin{minipage}{.35\textwidth}
\centering
\scalebox{\scalefig}{
\begin{tikzpicture}[scale=1.6]
\tikzstyle{whitenode}=[draw,circle,fill=white,minimum size=14pt,inner sep=0pt]
    \draw (2,-3) node[whitenode] (v4) [label=left:$v_4$] {\footnotesize{8}}
        -- ++(135:1cm) node[whitenode] (y) [label=left:$y$] {\footnotesize{3}}
        -- ++(45:1cm) node[whitenode] (v1) [label=above:$v_1$] {\footnotesize{8}}
        -- ++(-45:1cm) node[whitenode] (x) [label=above:$x$] {\footnotesize{8}}
        -- ++(45:1cm) node[whitenode] (v2) [label=above:$v_2$] {\footnotesize{8}}
        -- ++(-45:1cm) node[whitenode] (z) [label=right:$z$] {\footnotesize{3}}
        -- ++(-135:1cm) node[whitenode] (v3) [label=right:$v_3$] {\footnotesize{8}};
    \draw (x) edge [] node [label=left:] {} (y);
    \draw (x) edge [] node [label=left:] {} (z);
    \draw (x) edge [] node [label=left:] {} (v3);
    \draw (x) edge [] node [label=left:] {} (v4);
\end{tikzpicture}}
\caption{Subgraph $B_2$.}
\label{butterfly_subgraph2}
\end{minipage}
\end{figure}

Subgraph $B$ is called a \emph{butterfly} and edge $xy$ is called \emph{butterfly-like}.

An edge is called \emph{reducible} when it is weak or butterfly-like. Later, we will show how to reduce butterfly-like edges. Here, we prove that introducing them is sufficient for a linear bound for the number of reducible edges.

\begin{theorem}
Let $G$ be an $n$-vertex planar graph without isolated vertices and such that $\Delta(G) \leq 8$. Then $G$ contains at least $n/1460$ reducible edges. 

More generally, genus $g$ graphs without isolated vertices and with maximum degree at most 8 have at least $n/1460-(g-1)/3$ reducible edges.
\label{linear-bound-reducible-edges}
\end{theorem}

We note that the constant 1/1460 can be improved, but we decided not to do it for the sake of the proof simplicity, since any linear bound is sufficient for us.

\begin{proof}

Let $G=(V,E)$ and denote $m=|E|$. We can assume that $G$ is connected, because otherwise we apply the proof to each component of $G$. 

We can also assume that $\Delta(G) = 8$. Otherwise, all edges of $G$ are weak.

First, assume $G$ contains less than $\tfrac{n}{10}$ vertices of degree 8. Then, since $\Delta(G) = 8$, less than $\tfrac{9n}{10}$ vertices are of degree 8 or have an 8-neighbor. In other words, if $Z \subseteq V$ is a set of all vertices of a degree smaller than 8 that do not have a neighbor of degree 8, then $|Z| \geq \tfrac{n}{10}$. Let $I_Z$ be a maximal independent set in $G[Z]$. By maximality, $Z \subseteq I_Z \cup N_{G[Z]}(I_Z)$. Then, since $\Delta(G[Z]) \leq 7$, $|Z| \leq |I_Z| + 7|I_Z| = 8|I_Z|$, so $|I_Z| \geq \tfrac{|Z|}{8} \geq \tfrac{n}{80}$. Let $x$ be any vertex of $I_Z$ and let $y \in V$ be any of its neighbors. Since $x$ has no $8$-neighbors and $d(y) \leq 8$, edge $xy$ is $x$-weak. We can find a weak edge incident to any vertex in $I_Z$, and all of these edges are different because $I_Z$ is an independent set. Hence, $G$ contains at least $\tfrac{n}{80}$ reducible edges. From now on, we can assume that $G$ contains at least $\tfrac{n}{10}$ vertices of degree 8.

A vertex of $G$ that is an endpoint of a reducible edge is called \emph{reducible}; other vertices are \emph{non-reducible}. Let $R, N \subseteq V$ be the sets of all reducible and non-reducible vertices of $G$, respectively. The number of reducible edges in $G$ is at least $\tfrac{|R|}{2}$. If at least $\tfrac{1}{73}$ of the vertices of degree 8 in $G$ are reducible, then $|R| \geq \tfrac{n}{730}$ and $G$ contains at least $\tfrac{n}{1460}$ reducible edges. Therefore, we can assume that there are at least $\tfrac{72n}{730}$ non-reducible vertices of degree 8 in $G$.

We use the discharging method (see e.g.~\cite{CRANSTON2017766}). Fix any embedding of $G$ into a surface of genus $g$ and let $F$ be the set of faces of graph $G$. Each vertex $v \in V$ receives an initial \emph{charge} $ch(v) = d(v) - 4$ and each face $f \in F$ receives a charge $ch(f) = |f| - 4$ where $|f|$ is the length of the shortest closed walk induced by all edges incident to $f$. Using Euler’s Formula, we can easily bound the total initial charge on $G$:
\begin{equation}
\label{eq:euler}
    \sum_{v\in V} ch(v) + \sum_{f\in F} ch(f) = \sum_{v \in V}(d(v) - 4) + \sum_{f\in F}(|f| - 4) = 2m - 4n + 2m - 4|F| = -8(1-g).
\end{equation}
Our plan is to show that by moving charges in $G$, we can obtain the final charge $ch'$ such that:
\begin{description}
    \item[(F1)] $ch'(v) \geq -12$ for every $v \in R$,
    \item[(F2)] $ch'(v) \geq 0$ for every $v \in N$ such that $d(v) \leq 7$,
    \item[(F3)] $ch'(v) \geq 1/6$ for every $v \in N$ such that $d(v) = 8$,
    \item[(F4)] $ch'(f) \geq 0$ for every $f \in F$.
\end{description} Since the total charge does not change, we can get the claim of the theorem as follows:
\begin{equation}
\label{eq:boundR}
-8(1-g) = \sum_{v \in R} ch'(v) + \sum_{v \in N} ch'(v) + \sum_{f\in F} ch'(f) \geq \sum_{v \in R} ch'(v) + \sum_{v \in N} ch'(v) \geq -12|R| + \frac{72n}{730} \cdot \frac{1}{6}.   
\end{equation}
We get $|R| \geq \tfrac{n}{730}+\tfrac23(1-g)$, so $G$ contains at least $\frac{n}{1460}+\tfrac13(1-g)$ reducible edges, as required (recall that $g=0$ for planar graphs).

Now, we specify the rules of moving the charge, called the discharging rules. Let \emph{triangle face} be a face incident to exactly 3 distinct vertices. Let $(a, b, c)$-triangle be a triangle face incident to distinct vertices $x, y, z$ such that $d(x) = a, d(y) = b, d(z) = c$. Also, for example, let $(a, b, \geq c)$-triangle be a similar triangle face, but $d(z) \geq c$. Additionally, let $(a)$-triangle be a triangle face incident to $x, y, z$ where $\min \{d(x), d(y), d(z)\} = a$. Similar definitions follow for a $(\leq a)$-triangle and a $(\geq a)$-triangle --- we always consider the minimum degree among incident vertices.

\begin{flushleft}
    \textbf{Rule 1.} Each vertex sends 1 to each incident 2-vertex.\\
    \textbf{Rule 2.} Each vertex sends 1/3 to each incident 3-vertex.\\
    \textbf{Rule 3.} Each reducible vertex sends 1 to each incident triangle face.\\
    \textbf{Rule 4.} Each non-reducible $d$-vertex, where $d \in \{5, 6\}$, sends 1/15 to each incident ($d$, 8, 8)-triangle.\\
    \textbf{Rule 5.} Each non-reducible 5-vertex sends 1/3 to each incident (5, 5, 8)-triangle, (5, 6, 6)-triangle, and (5, 6, 7)-triangle.\\
    \textbf{Rule 6.} Each non-reducible 5-vertex sends 1/5 to each incident triangle face not mentioned in rules 4, 5.\\
    \textbf{Rule 7.} Each non-reducible 6-vertex sends 1/2  to each incident (4, 6, 8)-triangle.\\
    \textbf{Rule 8.} Each non-reducible 6-vertex sends 1/3 to each incident triangle face not mentioned in rules 4, 7.\\
    \textbf{Rule 9.} Each non-reducible $d$-vertex, where $d \in \{7, 8\}$, sends 1/2 to each incident $(\leq 4)$-triangle.\\
    \textbf{Rule 10.} Each non-reducible 7-vertex sends 2/5 to each incident (5)-triangle.\\
    \textbf{Rule 11.} Each non-reducible 7-vertex sends 1/3 to each incident triangle face not mentioned in rules 9, 10.\\
    \textbf{Rule 12.} Each non-reducible 8-vertex sends 7/15 to each incident (5)-triangle and (6)-triangle.\\
    \textbf{Rule 13.} Each non-reducible 8-vertex sends 1/3 to each incident triangle face not mentioned in rules 9, 12.\\
\end{flushleft}

Before we prove that the final charge satisfies (F1)-(F4) after following the rules 1-13, we present two helpful lemmas that follow immediately from VAL.

\begin{lemma}
    Let $G$ be a graph such that $\Delta(G) = 8$. For any edge $xy$ of $G$ if $d(x) + d(y) < 10$, then $xy$ is weak.
\label{sum-degrees-weak}
\end{lemma}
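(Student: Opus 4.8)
The plan is to simply unwind the definition: I will show directly that $xy$ is $x$-weak (weakness being a symmetric notion, this suffices), i.e.\ that $x$ has at most $8 - d(y) + [d(y) = 8]$ neighbours of degree $8$, which is exactly the hypothesis of VAL (Theorem~\ref{val}) for $D = 8$. The only structural input needed is the obvious fact that $y$ itself is one of the $d(x)$ neighbours of $x$.

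Concretely, I would count the $8$-neighbours of $x$ as follows. Among the $d(x)$ neighbours of $x$, the vertex $y$ is counted as an $8$-neighbour precisely when $d(y) = 8$; all other $8$-neighbours of $x$ lie among the remaining $d(x) - 1$ neighbours. Hence the number of $8$-neighbours of $x$ is at most $(d(x) - 1) + [d(y) = 8]$. On the other hand, the hypothesis $d(x) + d(y) < 10$ means $d(x) + d(y) \le 9$, i.e.\ $d(x) - 1 \le 8 - d(y)$. Combining these two bounds gives that $x$ has at most $8 - d(y) + [d(y) = 8]$ neighbours of degree $8$, so $xy$ is $x$-weak and therefore weak.

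The argument is essentially immediate, so there is no real obstacle; the one point worth stating carefully is the boundary case $d(x) + d(y) = 9$ with $d(y) \le 7$, where the crude estimate ``$x$ has at most $d(x)$ neighbours of degree $8$'' is off by exactly one. The remedy is precisely the observation above that in that case the neighbour $y$ of $x$ is not itself an $8$-vertex and so must be excluded from the count of $8$-neighbours of $x$. (One could additionally assume without loss of generality that $d(x) \le d(y)$, which forces $d(x) \le 4$, but this is not needed for the proof.)
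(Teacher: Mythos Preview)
Your proof is correct and is exactly the unwinding of ``follows immediately from VAL'' that the paper alludes to without spelling out. The paper gives no explicit argument beyond that remark, so your approach coincides with the intended one.
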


\begin{lemma}
    Let $G$ be a graph such that $\Delta(G) = 8$. If $x$ is non-reducible, then for every $y \in N_G(x)$ the number of 8-neighbors of $x$ is at least $9 - d(y) + [d(y) = 8]$.
\label{non-reducible-neighbors-degrees}
\end{lemma}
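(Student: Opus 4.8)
The plan is a one-step unwinding of the definitions of \emph{non-reducible} and \emph{$x$-weak}, reading the degree condition of VAL (Theorem~\ref{val}) in its contrapositive form with $D=8$. There is no real content to prove beyond matching the two definitions; this is why the paper says the statement follows immediately from VAL.

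First I would fix a non-reducible vertex $x$ and an arbitrary neighbor $y\in N_G(x)$. Since $x$ is non-reducible, it is not an endpoint of any reducible edge; in particular the edge $xy$ is not reducible, hence not weak, hence not $x$-weak. By the definition of an $x$-weak edge (the case $D=8$ of VAL), $xy$ being $x$-weak would mean that $x$ has at most $8-d(y)+[d(y)=8]$ neighbors of degree $8$. As this fails, $x$ must have strictly more than $8-d(y)+[d(y)=8]$ neighbors of degree $8$, and since these are integers, at least $9-d(y)+[d(y)=8]$ of them. As $y$ was an arbitrary neighbor of $x$, this is exactly the asserted bound.

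The statement is essentially a restatement of a definition, so I do not expect any genuine obstacle; the only point worth a sentence is why non-reducibility of $x$ rules out $xy$ being $x$-weak — this holds because an $x$-weak edge is in particular weak, hence reducible, and would therefore make its endpoint $x$ a reducible vertex, contradicting the hypothesis. Lemma~\ref{sum-degrees-weak} can be handled by the same reading of the definition (if $d(x)+d(y)<10$ and $d(y)<8$, then $y$ is itself a non-$8$ neighbor of $x$, so $x$ has at most $d(x)-1\le 8-d(y)$ neighbors of degree $8$ and $xy$ is $x$-weak; if $d(y)=8$ then $d(x)=1$ and $x$ has exactly one $8$-neighbor, meeting the bound $8-8+[8=8]=1$). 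I would present the current lemma in two or three short sentences along the lines of the second paragraph above.
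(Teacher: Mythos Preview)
Your argument is correct and matches the paper's approach exactly: the paper simply states that the lemma follows immediately from VAL, and your proof is precisely the contrapositive reading of the $x$-weak definition applied to the edge $xy$. The extra remarks on Lemma~\ref{sum-degrees-weak} are also fine but unnecessary for this particular lemma.
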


We start by proving (F4). Let $f \in F$ be any face. Faces never send any charge, so if $|f| \geq 4$, then $ch'(f) \geq ch(f) = |f| - 4 \geq 0$. We must show that if $f$ is a triangle face, then $f$ receives at least 1 unit of charge. If $f$ is incident to a reducible vertex, $f$ receives 1 unit of charge by Rule 3. We can now assume that all vertices incident to the triangle face $f$ are non-reducible, and thus all edges of $f$ are not reducible. By \reflemma{sum-degrees-weak}, we can assume that each edge incident to $f$ has the sum of endpoints' degrees at least 10.

Assume $f$ is a ($\leq 3$)-triangle. Then, $f$ must be a $(\leq 3, \geq 7, \geq 7)$-triangle. By Rule 9, $f$ receives 1 unit of charge.

Assume $f$ is a (4)-triangle. Then, $f$ must be a $(4, \geq 6, \geq 6)$-triangle. When $f$ is a $(4, \geq 7, \geq 7)$-triangle, then by Rule 9, $f$ receives 1 unit of charge. Otherwise, $f$ is a $(4, 6, \geq 6)$-triangle. Note that $f$ cannot be a $(4, 6, \leq 7)$-triangle since, by \reflemma{non-reducible-neighbors-degrees}, at least 5 neighbors of the 6-vertex must be of degree 8. Hence, in the only remaining case $f$ is a (4, 6, 8)-triangle and receives 1 unit of charge by rules 7 and 9.

Assume $f$ is a (5)-triangle. The case analysis below shows that $f$ gets at least 1 unit:
\begin{itemize}
    \item (5, 8, 8)-triangle by rules 4, 12,
    \item (5, 7, 8)-triangle by rules 6, 10, 12,
    \item (5, 7, 7)-triangle by rules 6, 10,
    \item (5, 6, 8)-triangle by rules 6, 8, 12,
    \item (5, 6, 7)-triangle by rules 5, 8, 10,
    \item (5, 6, 6)-triangle by rules 5, 8,
    \item (5, 5, 8)-triangle by rules 5, 12.
\end{itemize}
Note that $f$ cannot be a $(5, 5, \leq 7)$-triangle because, by \reflemma{non-reducible-neighbors-degrees}, any non-reducible 5-vertex with a 5-neigbor has at least four 8-neighbors.

Finally, assume $f$ is a $(\geq 6)$-triangle. If $f$ is a $(6, 8, 8)$-triangle, then $f$ receives $1/15 + 2 \cdot 7/15 = 1$ unit of charge by rules 4 and 12. Otherwise, $f$ receives at least 1/3 of a unit of charge from each incident vertex by rules 8, 11, 12, and 13. In total, $f$ receives at least 1.

From now on, we only need to show that all vertices satisfy conditions (F1)-(F3). We start by proving (F1), i.e., $ch'(v) \geq -12$ for each $v \in R$. Since $v$ is reducible, it sends at most 1 unit of charge to each of its $d(v)$ neighbors by rules 1 and 2. Additionally, by Rule 3, $v$ sends 1 unit of charge to all incident triangle faces, and there are at most $d(v)$ of them. Therefore, $ch'(v) \geq d(v) - 4 - 2d(v) \geq -12$.

Now, we prove (F2), i.e., $ch'(v) \geq 0$ for each $v \in N$ such that $d(v) \leq 7$. Note that by \reflemma{sum-degrees-weak}, all neighbors of $v$ must be of degree at least $10 - d(v)$. In particular, $v$ cannot be of degree one. Also, note that if $d(v) \leq 4$, then $v$ never sends any charge, and if $d(v) \geq 4$, then $v$ never receives any charge.

Assume $d(v) = 2$. Then, $v$ receives 2 units of charge by Rule 1, so $ch'(v) = ch(v) + 2 = d(v) - 4 + 2 = 0$.

Assume $d(v) = 3$. Then, $v$ receives 1 unit of charge in total by Rule 2, so $ch'(v) = ch(v) + 1 = d(v) - 4 + 1 = 0$.

Assume $d(v) = 4$. Observe that $v$ does not send or receive any charge, so $ch'(v) = 0$.

From now on, let $w$ be the neighbor of $v$ with the lowest degree. If there are multiple such neighbors, $w$ is any of them. Recall that $d(w) \geq 10 - d(v)$ by Lemma 1.

Assume $d(v) = 5$ and $d(w) = 5$. By \reflemma{non-reducible-neighbors-degrees}, $v$ has four 8-neighbors. Therefore, the set of triangle faces incident to $v$ consists of at most two incident (5, 5, 8)-triangles and at most three incident (5, 8, 8)-triangles. By rules 5 and 4, respectively, $v$ sends at most $2 \cdot 1/3 + 3 \cdot 1/15 \leq 1$, so $ch'(v) \geq 0$.

Assume $d(v) = 5$ and $d(w) = 6$. By \reflemma{non-reducible-neighbors-degrees}, $v$ has at least three 8-neighbors. Assume first that $v$ is incident to at most four triangle faces. At most one of them receives a charge by Rule 5, and the other by rules 4 and 6, so $v$ sends at most $1/3 + 3 \cdot 1/5 = 14/15 < 1$. Hence, we can assume all the faces incident with $v$ are triangle faces. Let $w, x_1, x_2, x_3, x_4$ be the neighbors of $v$ in the clockwise order in the considered embedding into a genus $g$ surface. Assume first that $d(x_1) < 8$ or $d(x_4) < 8$. By symmetry, we consider only the former case. Then, $v$ sends at most 1/3 to the $(5, 6, d(x_1))$-triangle $vwx_1$ by rules 5 and 6, exactly 1/5 to the $(5, d(x_1), 8)$-triangle $vx_1x_2$ by Rule 6, $2 \cdot 1/15$ to the (5, 8, 8)-triangles $vx_2x_3$ and $vx_3x_4$ by Rule 4, and finally 1/5 to the (5, 6, 8)-triangle $vwx_4$ by Rule 6. Hence, altogether $v$ sends at most $1/3 + 2 \cdot 1/5 + 2/15 = 13/15 < 1$. We are left with the case $d(x_1) = d(x_4) = 8$. By symmetry, $d(x_3) = 8$. The triangle faces $vwx_1$ and $vwx_4$ receive 1/5 by Rule 6, the triangle face $vx_3x_4$ receives 1/15 by Rule 4, and finally the triangle faces $vx_1x_2$ and $vx_2x_3$ receive at most 1/5 by Rule 4 or 6. Thus, $v$ sends at most $4/5 + 1/15 < 1$, as required.

Assume $d(v) = 5$ and $d(w) \geq 7$. Then, $v$ sends at most 1/5 to each incident triangle face by rules 4 and 6, so it sends at most 1 unit in total.

Assume $d(v) = 6$ and $d(w) = 4$. By \reflemma{non-reducible-neighbors-degrees}, all the remaining neighbors of $v$ are of degree 8. Hence, $v$ has at most two incident (4, 6, 8)-triangles and at most four incident (6, 8, 8)-triangles. By rules 7 and 4, respectively, $v$ sends at most $2 \cdot 1/2 + 4 \cdot 1/15 < 2$ units.

Assume $d(v) = 6$ and $d(w) \geq 5$. Then, $v$ sends at most 1/3 of a unit of charge to each incident triangle face by rules 4 and 8. So, it sends at most 2 units in total.

Assume $d(v) = 7$ and $d(w) = 3$. By \reflemma{non-reducible-neighbors-degrees}, all the remaining neighbors of $v$ are of degree 8. Hence, $v$ has at most 2 incident (3, 7, 8)-triangles and at most 5 incident (7, 8, 8)-triangles. In total, $v$ sends at most $1/3 + 2 \cdot 1/2 + 5 \cdot 1/3 = 3$ units of charge by rules 2, 9, and 11, so $ch'(v) \geq 0$.

Assume $d(v) = 7$ and $d(w) = 4$. By \reflemma{non-reducible-neighbors-degrees}, $v$ has at least five 8-neighbors. Let $k$ be the number of $(\leq 5)$-triangles incident to $v$. We can observe that $k \leq 4$ because $v$ has at most two $(\leq 5)$-neighbors. The rest of the triangles incident to $v$ must be $(\geq 6, \geq 6, 7)$-triangles, and there are at most $7 - k$ of them. Therefore, $v$ sends at most $k \cdot 1/2 + (7 - k) \cdot 1/3 = k/6 + 7/3$ of a unit of charge by rules 9, 10, and 11. Since $k \leq 4$, $v$ sends at most 3 units of charge.

Assume $d(v) = 7$ and $d(w) \geq 5$. Then, $v$ sends at most 2/5 of a unit of charge to each incident triangle face by rules 10 and 11, so $v$ sends $7 \cdot 2/5 \leq 3$ units of charge in total. This completes the proof of (F2).

Finally, we prove (F3), i.e., $ch'(v) \geq 1/6$ for each $v \in N$ such that $d(v) = 8$.

Assume $d(v) = 8$ and $d(w) = 2$. By \reflemma{non-reducible-neighbors-degrees}, all the remaining neighbors of $v$ are of degree 8. Note that $w$ can belong to only one triangle face because otherwise $G$ is not simple. Hence, $v$ has at most one incident (2, 8, 8)-triangle and at most six incident (8, 8, 8)-triangles. In total, $v$ sends at most $1 + 1/2 + 6 \cdot 1/3 < 4 - 1/6$ by rules 1, 9, and 13.

Assume $d(v) = 8$ and $d(w) = 3$. By \reflemma{non-reducible-neighbors-degrees}, $v$ has at least six 8-neighbors. Let $t$ be the number of 3-neighbors of $v$. Clearly, $t \in \{1, 2\}$. Let $k$ be the number of $(\leq 6)$-triangles incident to $v$. Since $v$ has at most two $(\leq 6)$-neighbors, we can observe that $k \leq 4$. The rest of the triangle faces incident to $v$ must be $(\geq 7, \geq 7, 8)$-triangles, and there are at most $8 - k$ of them. In total, $v$ sends at most $t \cdot 1/3 + k \cdot 1/2 + (8 - k) \cdot 1/3 = t/3 + k/6 + 8/3$ of a unit of charge by rules 2, 9, 12, and 13. If $t = 1$ or $k \leq 3$, then $v$ sends at most $23/6$ of a unit of charge, so $ch'(v) \geq 1/6$. Otherwise, i.e., if $t = 2$ and $k = 4$, $G$ contains a butterfly as a subgraph, and one of the edges incident to $v$ is butterfly-like, which contradicts $v$ being non-reducible.

Assume $d(v) = 8$ and $d(w) = 4$. By \reflemma{non-reducible-neighbors-degrees}, $v$ has at least five 8-neighbors. Let $k$ be the number of $(\leq 6)$-triangles incident to $v$. Since $v$ has at most three $(\leq 6)$-neighbors, we can observe that $k \leq 6$. The rest of the triangle faces incident to $v$ must be $(\geq 7, \geq 7, 8)$, and there are at most $8 - k$ of them. Hence, $v$ sends at most $k \cdot 1/2 + (8 - k) \cdot 1/3 = k/6 + 8/3$ by rules 9, 12, and 13. Since $k \leq 6$, $v$ sends at most $11/3 \leq 4 - 1/6$ of a unit of charge.

Assume $d(v) = 8$ and $d(w) \geq 5$. Then, $v$ sends at most 7/15 to each incident triangle face by rules 12 and 13. In total, $v$ sends at most $8 \cdot 7/15 \leq 4 - 1/6$. This completes the proof of (F3) and the whole theorem.
\end{proof}

\section{Our coloring algorithm}
\label{coloring-algorithm}

In our solution, we reduce many reducible edges simultaneously. Roughly, it means that we remove some reducible edges from a graph at once, find the edge-coloring of the rest of the graph recursively, add the removed edges back to the graph, and finally expand the edge-coloring to all of them efficiently. 

Let $G = (V, E)$ be an $n$-vertex planar graph such that $\Delta(G) \leq 8$. Let $W \subseteq E$ be a set of uncolored reducible edges that are pairwise 4-independent. Assume all edges of $E \setminus W$ are colored. Below, we show how to choose set $W' \subseteq W$ of size $\Omega(|W|)$ and expand the 8-edge-coloring of $G$ by all edges of $W'$ in time $O(n)$. This procedure is sufficient for our needs, as we can successively apply it to color all edges of $W$ in time $O(n \log |W|)$ --- in each step, the number of uncolored edges of $G$ decreases by a constant factor.

The set $W'$ of edges to reduce will contain either only weak edges or only butterfly-like edges. Chrobak and Nishizeki~\cite{chrobak-nishizeki} have already shown how to reduce weak edges, so we omit details related to weak edges in this work. We will rely on the following theorem that directly follows from the correctness of the algorithm provided by Chrobak and Nishizeki.

\begin{theorem}
    \textnormal{(Chrobak, Nishizeki ~\cite{chrobak-nishizeki}).} Let $G = (V, E)$ be a graph such that $\Delta(G) \leq 8$ and let $W \subseteq E$ be a set of weak edges that are pairwise 2-independent. Also, let $\pi$ be a partial 8-edge-coloring of $G$ such that $\pi^{-1}([8]) = E \setminus W$. We can find in time $O(|E|)$ a partial 8-edge-coloring $\sigma$ of $G$ such that $\sigma^{-1}([8]) = (E \setminus W) \cup W'$ for some $W' \subseteq W$ of size $\Omega(|W|)$.
\label{color-many-weak}
\end{theorem}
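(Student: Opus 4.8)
The plan is to realize, with the missing bounds supplied, the two‑step strategy of Chrobak and Nishizeki~\cite{chrobak-nishizeki} recalled in the introduction: give each edge of $W$ a constant‑sized \emph{type}, keep a largest type class, and thin it to a set $W'$ whose reductions pairwise do not interfere. I would first revisit the classical proof of VAL (\reftheorem{val}) for $D=8$. Since $W$ is $2$‑independent, for $e=xy\in W$ (say, $x$‑weak) the unique uncolored edge within distance $1$ of $\{x,y\}$ is $e$, and no chain of $\pi$ contains an edge of $W$, so reducing $e$ from $\pi$ in isolation is exactly the VAL situation. Made deterministic --- e.g.\ always take the least available free colors and grow fans greedily --- this reduction builds a maximal fan at $x$ rooted at $y$, recolors its $O(1)$ fan edges in one rotation (all incident to $x$, plus possibly $e$ itself and one edge at $y$), and \emph{kempes} at most one chain $C$: an $(\alpha,\beta)$‑chain for colors $\alpha,\beta$ output by the procedure, which is a path, possibly long, with an endpoint in $N[x]\cup N[y]$ (writing $N[v]=\{v\}\cup N_G(v)$) --- the fan vertex, or $y$, at which a free color forces $C$. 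Letting $E(e)$ denote the set of edges recolored while reducing $e$ alone, we have $V(E(e))\subseteq(N[x]\cup N[y])\cup V(C)$. (When $\bar\pi(x)\cap\bar\pi(y)\neq\varnothing$ no chain is needed; set $C=\varnothing$ and give $e$ a dummy type.)

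The \emph{type} of $e$ is the unordered pair $\{\alpha,\beta\}$ of its chain colors, or the dummy; there are at most $1+\binom82=29$ types, so some class $W_0\subseteq W$ has $|W_0|\ge|W|/29$. By $2$‑independence the sets $N[x_e]\cup N[y_e]$ are pairwise disjoint over $e\in W$; hence a single chain $C$ equals $C(e)$ for at most two edges $e\in W$ (its two path endpoints each lie in at most one of these sets), and within $W_0$ all chains are $(a,b)$‑chains for one fixed pair $\{a,b\}$ and so are pairwise equal or vertex‑disjoint (distinct components of $G_{ab}$), whence $\sum_{e\in W_0}|V(C(e))|\le 2|V(G_{ab})|=O(n)$. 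Now build a conflict graph $H$ on $W_0$, joining $e,e'$ when $V(E(e))\cap V(E(e'))\neq\varnothing$; by the shape of $E(\cdot)$ above, and since the $N[x_e]\cup N[y_e]$ are disjoint, this requires $C(e)=C(e')$, or $V(C(e))$ meeting $N[x_{e'}]\cup N[y_{e'}]$, or the reverse. The first kind of pair occurs at most $|W_0|$ times; for the rest, each vertex of $G$ lies on at most one $(a,b)$‑chain and belongs to $N[x_{e'}]\cup N[y_{e'}]$ for at most two $e'$, while $\sum_{e'\in W_0}|N[x_{e'}]\cup N[y_{e'}]|=O(|W_0|)$, so such pairs also number $O(|W_0|)$. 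Thus $H$ has $\Omega(|W_0|)$ vertices and $O(|W_0|)$ edges, hence an independent set $W'$ with $|W'|=\Omega(|W_0|)=\Omega(|W|)$, found greedily in $O(|W_0|)$ time.

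For $e\in W'$ the sets $V(E(e))$ are pairwise disjoint, which is precisely the condition Chrobak and Nishizeki use: executing all reductions --- for each $e\in W'$, in any order, kempe $C(e)$ and then perform $e$'s fan rotation --- gives a proper partial $8$‑edge‑coloring $\sigma$ with $\sigma^{-1}([8])=(E\setminus W)\cup W'$, because around the edges a reduction touches the coloring is unchanged from $\pi$, so each reduction still succeeds, and no vertex is incident to edges changed by two of them. The total time is $O(|V|)$: with color arrays at vertices, $\bar\pi(\cdot)$ costs $O(1)$ and each fan and type cost $O(\Delta)=O(1)$; the chains $C(e)$ and all edges of $H$ are computed in one traversal of total length $O\!\big(\sum_e|V(C(e))|\big)=O(n)$, aided by a table sending each vertex to the at most two edges claiming it; a size‑$\Omega(|W_0|)$ independent set in the $O(|W_0|)$‑edge graph $H$ is found in $O(|W_0|)$ time; and executing the reductions costs $O\!\big(\sum_e|V(C(e))|\big)+O(|W'|)=O(n)$.

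The delicate step is the sparsity of $H$, not the deterministic recoloring: a chain $C(e)$ can have length $\Omega(n)$ and so pass near many edges of $W_0$, so $H$ need not have bounded degree. This is handled by the global bound $\sum_{e\in W_0}|V(C(e))|=O(n)$ together with ``each $(a,b)$‑chain serves $O(1)$ edges of one type'', which forces $H$ to have only $O(|W_0|)$ edges; an averaging argument then yields the constant‑fraction set $W'$, and the pairwise disjointness of the $V(E(e))$ makes the recolorings commute, so no further interaction has to be analyzed.
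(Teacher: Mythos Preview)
The paper does not give its own proof of this statement: it is cited from Chrobak and Nishizeki and only sketched in Section~1.3. Your write-up is a faithful and essentially correct reconstruction of that sketch, with one structural difference worth flagging.

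The paper's sketch (and its parallel argument for butterfly-like edges in \reftheorem{color-many-butterfly-like} and \reflemma{single-type-color-many}) uses a \emph{bounded-degree} filter: once all edges share the same chain-color pair $\{a,b\}$, one only insists that distinct edges use distinct $(a,b)$-chains; since each such chain is a path with two endpoints, each lying in at most one $N[x_e]\cup N[y_e]$, every edge collides with at most one other and at least half survive. That ``distinct chains'' already suffices for coherence is then argued per type, as in the proof of \reflemma{single-type-color-many}, by showing every local vertex of $e'$ sits on a named $(a,b)$-chain different from $C(e)$. You instead declare $e,e'$ in conflict whenever $V(E(e))\cap V(E(e'))\neq\varnothing$, correctly observe that this conflict graph $H$ need \emph{not} have bounded degree (a long $C(e)$ can sweep through many $N[x_{e'}]\cup N[y_{e'}]$), and recover the result by showing $H$ has only $O(|W_0|)$ edges in total, hence a constant-fraction independent set. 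Your route gives a cleaner sufficient condition---vertex-disjoint recolorings trivially commute---at the cost of the extra sparsity count; the paper's route gives the explicit ``at least half'' constant but needs the per-vertex coherence check. Two minor points: your ``at most two $e'$'' should read ``at most one'' by $2$-independence (harmless for the bound), and ``found greedily in $O(|W_0|)$ time'' must mean the min-degree/Tur\'an-type greedy, since the naive greedy only guarantees $|W_0|/(\Delta(H)+1)$ and you have just argued $\Delta(H)$ is unbounded.
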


In what follows, we will show a similar result for butterfly-like edges.

\subsection{Types}

We start by introducing the notion of a \emph{type} of a butterfly-like edge. Every butterfly-like edge will be of at least one type in our setup. The final algorithm, given an 8-edge-coloring and a set $W$ of butterfly-like edges, will extend the set of colored edges by a subset $W' \subseteq W$ where all the edges in $W'$ are of the same type.

Let $G = (V, E)$ be a graph. Let $\pi$ be a partial 8-edge-coloring of $G$ and let $e =xy \in E$ be an uncolored butterfly-like edge. We define the types below. As we will see, whether $e$ has a specific type depends on $G$ and $\pi$.

\tikzset{snake it/.style={decorate, decoration=snake}}

\begin{figure}[t]
	\centering
	\begin{subfigure}{0.3\textwidth}
		\centering
		\scalebox{\scalefig}{
			\begin{tikzpicture}[scale=1.8]
				\tikzstyle{whitenode}=[draw,circle,fill=white,minimum size=15pt,inner sep=0pt]
				\draw (2,-3) node[whitenode] (y) [label=below:{$b$}] {$y$}
				-- ++(0:1.25cm) node[whitenode] (x) [label=below:$a$] {$x$};
				\draw (x) edge [draw=none] node [below] {$\bot$} (y);
				\node[whitenode] (c1) [above = 1cm of y] {};
				\node[whitenode] (c2) [above = 1cm of x] {};
				\draw (y) edge [snake it] node [left, xshift=-1pt] {$(a, b)$} (c1);
				\draw (x) edge [snake it] node [right] {$(a, b)$} (c2);
		\end{tikzpicture}}
		\caption{Type $1_{ab}$.}
		\label{type1}
	\end{subfigure}
	\begin{subfigure}{0.3\textwidth}
		\centering
		\scalebox{\scalefig}{
			\begin{tikzpicture}[scale=1.8]
				\tikzstyle{whitenode}=[draw,circle,fill=white,minimum size=15pt,inner sep=0pt]
				\draw (2,-3) node[whitenode] (y) [label=below:{$a, b$}] {$y$}
				-- ++(45:1cm) node[whitenode] (z) [] {$z$}
				-- ++(-45:1cm) node[whitenode] (x) [label=below:$c$] {$x$};
				\draw (x) edge [] node [below] {$\bot$} (y);
				\draw (y) edge [draw=none] node [above, xshift=-4pt, yshift=-3pt] {$c$} (z);
				\draw (z) edge [draw=none] node [above, xshift=4pt, yshift=-3pt] {$a$} (x);
				\node[whitenode] (c1) [right = 1cm of z] {};
				\node[whitenode] (c2) [right = 1cm of x] {};
				\draw (z) edge [snake it] node [above, xshift=-1pt, yshift=1pt] {$(a, b)$} (c1);
				\draw (x) edge [snake it] node [above, xshift=-1pt, yshift=1pt] {$(a, b)$} (c2);
		\end{tikzpicture}}
		\caption{Type $2_{ab}$.}
		\label{type2}
	\end{subfigure}
	\begin{subfigure}{0.3\textwidth}
		\centering
		\scalebox{\scalefig}{
			\begin{tikzpicture}[scale=1.8]
				\tikzstyle{whitenode}=[draw,circle,fill=white,minimum size=15pt,inner sep=0pt]
				\draw (2,-3) node[whitenode] (y) [label=below:{$b, c$}] {$y$}
				-- ++(0:1.25cm) node[whitenode] (x) [label=below:{$a$}] {$x$}
				-- ++(0:1.25cm) node[whitenode] (z) [label=below:{$b$}] {$z$};
				\draw (x) edge [draw=none] node [below] {$\bot$} (y);
				\draw (x) edge [draw=none] node [below] {$c$} (z);
				\draw (y) edge [bend left=60, snake it] node [above, xshift=0pt, yshift=3pt] {$(a, b)$} (x);
		\end{tikzpicture}}
		\caption{Type $3_{ab}$.}
		\label{type3}
	\end{subfigure}
	\begin{subfigure}{0.3\textwidth}
		\centering
		\scalebox{\scalefig}{
			\begin{tikzpicture}[scale=1.8]
				\tikzstyle{whitenode}=[draw,circle,fill=white,minimum size=15pt,inner sep=0pt]
				\draw (2,-3) node[whitenode] (y) [label=below:{$a, b$}] {$y$}
				-- ++(0:1.25cm) node[whitenode] (x) [label=below:$c$] {$x$}
				-- ++(45:1cm) node[whitenode] (v) {$v$}
				-- ++(-45:1cm) node[whitenode] (z) [label=below:{$b$}] {$z$};
				\draw (x) edge [draw=none] node [below] {$\bot$} (y);
				\draw (x) edge [draw=none] node [above, xshift=-4pt, yshift=-3pt] {$a$} (v);
				\draw (v) edge [draw=none] node [above, xshift=4pt, yshift=-3pt] {$c$} (z);
				\draw (v) edge [bend right=60, snake it] node [above, xshift=-16pt, yshift=-1pt] {$(a, b)$} (x);
				\draw (x) edge node [below] {} (z);
		\end{tikzpicture}}
		\caption{Type $4_{ab}$.}
		\label{type4}
	\end{subfigure}
	\begin{subfigure}{0.3\textwidth}
		\centering
		\scalebox{\scalefig}{
			\begin{tikzpicture}[scale=1.8]
				\tikzstyle{whitenode}=[draw,circle,fill=white,minimum size=15pt,inner sep=0pt]
				\draw (2,-3) node[whitenode] (y) [label=below:{$a, c$}] {$y$}
				-- ++(0:1.25cm) node[whitenode] (x) [label=below:$b$] {$x$}
				-- ++(45:1cm) node[whitenode] (v) {$v$}
				-- ++(-45:1cm) node[whitenode] (z) [label=below:{$b, c$}] {$z$};
				\draw (x) edge [draw=none] node [below] {$\bot$} (y);
				\draw (x) edge [draw=none] node [above, xshift=-4pt, yshift=-3pt] {$c$} (v);
				\draw (v) edge [draw=none] node [above, xshift=4pt, yshift=-3pt] {$a$} (z);
				\draw (y) edge [bend left=60, snake it] node [above, xshift=0pt, yshift=3pt] {$(a, b)$} (x);
				\draw (x) edge node [below] {} (z);
		\end{tikzpicture}}
		\caption{Type $5_{ab}$.}
		\label{type5}
	\end{subfigure}
	\begin{subfigure}{0.3\textwidth}
		\centering
		\scalebox{\scalefig}{
			\begin{tikzpicture}[scale=1.8]
				\tikzstyle{whitenode}=[draw,circle,fill=white,minimum size=15pt,inner sep=0pt]
				\draw (2,-3) node[whitenode] (y) [label=above:{$a, c, d$}] {$y$}
				-- ++(0:1.25cm) node[whitenode] (x) [label=below:$b$] {$x$}
				-- ++(45:1cm) node[whitenode] (v1) {$v_1$}
				-- ++(-45:1cm) node[whitenode] (z) [label=right:{$b, d$}] {$z$}
				-- ++(-135:1cm) node[whitenode] (v2) {$v_2$};
				\draw (x) edge [] node [label=left:] {} (v2);
				\draw (x) edge [draw=none] node [below] {$\bot$} (y);
				\draw (x) edge [draw=none] node [above, xshift=-4pt, yshift=-3pt] {$c$} (v1);
				\draw (x) edge [draw=none] node [below, xshift=-3pt, yshift=1pt] {$a$} (v2);
				\draw (v1) edge [draw=none] node [above, xshift=4pt, yshift=-3pt] {$a$} (z);
				\draw (v2) edge [draw=none] node [below, xshift=2pt, yshift=1pt] {$c$} (z);
				\draw (y) edge [bend right=30, snake it] node [below, xshift=-2pt, yshift=-6pt] {$(a, b)$} (v2);
				\draw (v1) edge [bend right=60, snake it] node [above, xshift=-16pt, yshift=-1pt] {$(c, d)$} (x);
				\draw (x) edge node [below] {} (z);
		\end{tikzpicture}}
		\caption{Type $6_{abcd}$.}
		\label{type6}
	\end{subfigure}
	
	\caption{Types of butterfly-like edges}
\end{figure}

Edge $e$ has type $1_{ab}$ (see Fig.~\ref{type1}) for different $a, b \in [8]$ if $a \in \bar{\pi}(x)$, $b \in \bar{\pi}(y)$, and $x$ is not an endpoint of the $(a, b)$-chain starting at $y$.

Edge $e$ has type $2_{ab}$ (see Fig.~\ref{type2}) for different $a, b \in [8]$ if $a, b \in \bar{\pi}(y)$, and for some $c \in \bar{\pi}(x) \setminus \{b\}$ there is $z \in N_G(x) \cap N_G(y)$ such that $\pi(yz) = c$, $\pi(xz) = a$, and $xz$ does not belong to an $(a, b)$-cycle.

Edge $e$ has type $3_{ab}$ (see Fig.~\ref{type3}) for different $a, b \in [8]$ if $a \in \bar{\pi}(x)$, $b \in \bar{\pi}(y)$, there is an $(a, b)$-chain with endpoints $x$ and $y$, and for some $c \in \bar{\pi}(y) \setminus \{b\}$ there is $z \in N_G(x)$ such that $b \in \bar{\pi}(z)$ and $\pi(xz) = c$.

Edge $e$ has type $4_{ab}$ (see Fig.~\ref{type4}) for different $a, b \in [8]$ if $a, b \in \bar{\pi}(y)$, there is an $(a, b)$-cycle containing $x$, and for some $c \in \bar{\pi}(x)$ there are different $z, v \in V \setminus \{x, y\}$ such that $xv, zv, xz \in E$, $\pi(xv) = a$, $\pi(zv) = c$, and $b \in \bar{\pi}(z)$.

Edge $e$ has type $5_{ab}$ (see Fig.~\ref{type5}) for different $a, b \in [8]$ if $a \in \bar{\pi}(y)$, $b \in \bar{\pi}(x)$, there is an $(a, b)$-chain with endpoints $x$ and $y$, and for some $c \in \bar{\pi}(y) \setminus \{a\}$ there are different $z, v \in V \setminus \{x, y\}$ such that $xv, zv, xz \in E$, $\pi(xv) = c$, $\pi(zv) = a$, and $b, c \in \bar{\pi}(z)$.

Edge $e$ has type $6_{abcd}$ (see Fig.~\ref{type6}) for different $a, b, c, d \in [8]$ if $a, c, d \in \bar{\pi}(y)$, $b \in \bar{\pi}(x)$, there is an $(a, b)$-chain with endpoints $x$ and $y$, there is a $(c, d)$-cycle containing $x$, and there are different $z, v_1, v_2 \in V \setminus \{x, y\}$ such that $xv_1, xv_2, zv_1, zv_2, xz \in E$, $\pi(xv_1) = c$, $\pi(xv_2) = a$, $\pi(zv_1) = a$, $\pi(zv_2) = c$, and $b, d \in \bar{\pi}(z)$.


Edge $e$ has type 0 if we can color $e$ after recoloring only edges at distance at most 1 from $e$. This type is a bit special because it requires different algorithms depending on $G$ and $\pi$. Note that the property that $e$ has type 0 depends only on the colors of edges at distance at most 2 from $e$ because only these edges can get (re)colored or are incident with an edge that gets (re)colored.

\subsection{The chain data structure}

We first prove that any uncolored butterfly-like edge is of one of the defined types. However, to make the final algorithm efficient, we must also show how to find types of butterfly-like edges quickly. The problem is that types may depend on chains. We must be able to find endpoints of an $(a, b)$-chain to verify that, for example, an edge is of type $3_{ab}$. We must also be able to check whether an $(a, b)$-chain is a cycle to verify that, for example, an edge is of type $4_{ab}$. These checks must be done efficiently, since we aim at finding types of $\Omega(n)$ butterfly-like edges in time $O(n)$. Chains can be of length $\Theta(n)$ and a single chain can be needed to determine the type of $\Omega(n)$ edges, so checking the chain-related conditions naively could take time $\Theta(n^2)$, which would be too slow.

To find types of multiple butterfly-like edges more efficiently, we design a simple data structure called the \emph{chain data structure}. For any graph $G$ and a partial $D$-edge-coloring $\pi$ of $G$, the chain data structure $CDS(G, \pi)$ is defined as follows. For each $v \in V(G)$ and for each different $a, b \in [D]$, the $CDS(G, \pi)$ stores the following information about the $(a, b)$-chain $P$ containing $v$:
\begin{itemize}
    \item the boolean value representing whether $P$ is a cycle, and if not,
    \item the endpoints of $P$.
\end{itemize}
Note that the chain $P$ always exists, though it may have no edges.

We will only use chain data structures when $\Delta(G) \leq 8$ and $D = 8$. With such constraints, the chain data structure has size $O(|V(G)|)$, since it uses $O(1)$ words for each vertex. It can also be computed quickly.

\begin{lemma}
\label{compute-cds}
    Fix $D \in \mathbb{N}$. Given an $n$-vertex graph $G$ such that $\Delta(G) \leq D$ and partial $D$-edge-coloring $\pi$ of $G$, $CDS(G, \pi)$ can be computed in time $O(n)$.
\end{lemma}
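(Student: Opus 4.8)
The plan is to handle the $\binom{D}{2}$ color pairs one at a time, spending $O(n)$ time on each; since $D$ is fixed, this gives $O(n)$ overall. First I would build an auxiliary table storing, for every vertex $v$ and every color $c\in[D]$, the unique neighbor $u$ of $v$ with $\pi(uv)=c$ (or a null marker when there is none). Since $\Delta(G)\le D$, we have $2|E(G)|\le Dn$, so one scan over the edges fills this table in $O(n)$ time and $O(Dn)=O(n)$ space. This table lets us walk along a chain in constant time per step: having just entered a vertex $u$ along an edge of color $c\in\{a,b\}$, the chain continues (if at all) along the edge of the opposite color $c'\in\{a,b\}\setminus\{c\}$, whose other endpoint is $\text{nbr}[u][c']$; a null entry signals that $u$ is an endpoint of the $(a,b)$-chain. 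Because $c$ and $c'$ strictly alternate, consecutive edges on the walk always have different colors, so the walk is well defined and never backtracks.

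Fix a pair $\{a,b\}$. The subgraph $G_{ab}$ has maximum degree at most $2$, so every $(a,b)$-chain is a single vertex with no edges, a simple path, or a cycle. Using a vertex ``visited'' marker that is reset for each pair (an $O(n)$ cost per pair), I would make two sweeps over $V(G)$. In the first sweep, for each still-unvisited vertex $v$ having at most one incident edge colored $a$ or $b$, I walk the maximal alternating path starting at $v$; it terminates at the other endpoint (and is trivial, $\{v\}$ with both endpoints $v$, when $v$ has no incident $a$- or $b$-edge). I mark each vertex met on this walk and record for it that its chain is not a cycle, together with the two endpoints just discovered. Since every vertex of a path component — interior vertices included — is reached from that component's endpoints, after the first sweep the unvisited vertices are exactly those lying on cycles; in the second sweep, for each such vertex I walk its cycle once, marking all its vertices and recording that their chain is a cycle. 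Every vertex is thus visited in one of the two sweeps, so $CDS(G,\pi)$ is fully populated.

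For a single pair the cost is $O(n)$: the two sweeps make $n$ iterations of $O(1)$ work each, and the path and cycle walks together traverse each vertex only a constant number of times, because each vertex lies on exactly one $(a,b)$-chain and is walked only when that chain is first discovered. Summing over the $\binom{D}{2}=O(1)$ pairs and adding the $O(n)$ preprocessing yields the claimed $O(n)$ bound. The one place that needs attention is precisely this amortized analysis: it is essential to discover all path components from their degree-$\le 1$ endpoints in the first sweep, so that interior path vertices are never used as starting points and the second sweep sees exactly — and all of — the cyclic components; handling degree-$2$ vertices indiscriminately in a single pass would risk re-walking path components and break the linear bound.
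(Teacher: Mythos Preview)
Your proof is correct and follows essentially the same approach as the paper: iterate over the $\binom{D}{2}=O(1)$ color pairs, traverse each component of $G_{ab}$ once, and record the required information for every vertex, for a total of $O(n)$ time. The paper simply says ``traverse $G_{ab}$ using an algorithm like DFS'' without spelling out the neighbor table or the two-sweep path/cycle distinction, but the underlying idea is the same.
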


\begin{proof}
	For each pair of different colors $a, b \in [D]$ we filter out only edges of color $a$, $b$ obtaining a collection of paths and cycles. For each such path/cycle we update the information of $CDS$ in its vertices.
	
	Fix a pair of different colors $a, b \in [D]$. Recall that $G_{ab}$ is the subgraph of $G$ induced by the edges colored either $a$ or $b$ with respect to $\pi$. We can compute $G_{ab}$ in time $O(n)$. Then, we traverse $G_{ab}$ using an algorithm like DFS. For any component of $G_{ab}$, which is an $(a, b)$-chain, we can easily check whether it is a cycle and remember all visited vertices in preorder (the first and the last being the endpoints if the chain is a path). Finally, we can save the information related to this chain for all visited vertices. All these steps take time $O(|V(G_{ab})|) = O(n)$. We can repeat the procedure for all $\binom{|D|}{2} = O(1)$ pairs of colors $a, b$ in time $O(n)$.
\end{proof}

\subsection{Computing types}

We are ready to show that any butterfly-like edge has some type and we can find it quickly.

\begin{lemma}
\label{types-butterfly}
    Let $G = (V, E)$ be a graph such that $\Delta(G) \leq 8$ and let $\pi$ be its partial 8-edge-coloring. If $e \in E$ is an uncolored butterfly-like edge such that all edges at distance at most 1 from $e$ are colored, then $e$ is of at least one of the types $0, 1_{ab}, 2_{ab}, 3_{ab}, 4_{ab}, 5_{ab}, 6_{abcd}$ for some $a, b, c, d \in [8]$. Moreover, given $CDS(G, \pi)$, we can find a type of $e$ in constant time.
\end{lemma}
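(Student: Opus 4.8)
The statement has two parts: (i) every uncolored butterfly-like edge $e=xy$ whose nearby edges are colored has one of the listed types, and (ii) given $CDS(G,\pi)$ the type can be found in $O(1)$ time. The key structural fact I would exploit is that $x$ is an $8$-vertex with exactly one uncolored incident edge ($e$ itself), so $|\bar\pi(x)|=1$; write $\bar\pi(x)=\{\alpha\}$ for the unique free color at $x$. Meanwhile $y$ is a $3$-vertex with $d(y)=3$ and one uncolored edge, so $|\bar\pi(y)|=6$. The plan is to run the Vizing-style fan/chain argument that attempts to extend the coloring to $e$, but to carefully enumerate the small number of combinatorial situations that can arise, matching each to one of the types $1_{ab},\dots,6_{abcd}$, and to declare type $0$ whenever the extension succeeds after touching only edges at distance $\le 1$ from $e$.

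Concretely, I would argue as follows. Since $|\bar\pi(x)|=1$ and $|\bar\pi(y)|=6\ge 2$, either $\alpha\in\bar\pi(y)$ — in which case $e$ can be colored $\alpha$ directly, so $e$ has type $0$ — or $\alpha\notin\bar\pi(y)$, meaning some neighbor $w$ of $y$ has $\pi(yw)=\alpha$. Because $d(y)=3$, $y$ has only two other incident edges, and the butterfly structure tells us that the neighbors of $y$ besides $x$ are among $\{z,v_1\}$ or $\{z,v_2\}$ etc., depending on whether the butterfly is a $B_1$ or $B_2$; in particular one of $y$'s neighbors may be $z$, a neighbor of $x$. Pick $b\in\bar\pi(y)$. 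Then consider the $(\alpha,b)$-chain starting at $y$: if it does not end at $x$, kemping it frees $\alpha$ at $x$ while keeping $b$ free at $y$, and we can color $e$; this is exactly type $1_{\alpha b}$ (up to renaming $a=\alpha$). If the $(\alpha,b)$-chain does end at $x$ for every choice of $b\in\bar\pi(y)\setminus\{\alpha\}$, we are forced into the harder sub-cases. There one uses a neighbor $z$ (either directly $z\in N_G(x)\cap N_G(y)$ from the butterfly, giving types $2$ and $3$, or via an intermediate vertex $v$ with a $2$-path $x v z$, giving types $4,5$, or via two such intermediate vertices $v_1,v_2$ forming a $4$-cycle $x v_1 z v_2$, giving type $6$) to perform a sequence of one or two kempes plus local recolorings that free a common color for $e$. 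The butterfly definition was designed precisely so that such a vertex $z$ with the required free colors and the required $2$-paths or $4$-cycle into $x$ exists — this is where the two shapes $B_1,B_2$ and the degree constraints $d(v_i)=8$, $d(z)=3$ get used. I would walk through each of the seven shapes of the reduction procedure once, showing that if none of types $1$–$6$ applies then the extension succeeds within distance $1$, hence type $0$.

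For part (ii), once the type definitions are in hand the algorithm is essentially a bounded case check: $\bar\pi(x)$ and $\bar\pi(y)$ are read off in $O(1)$ since both vertices have degree $\le 8$; for each of the $O(1)$ candidate color pairs $(a,b)$ (or quadruples $(a,b,c,d)$), the chain-related predicates — "$x$ is not an endpoint of the $(a,b)$-chain at $y$", "$xz$ lies on an $(a,b)$-cycle", "there is an $(a,b)$-chain with endpoints $x$ and $y$", "there is an $(a,b)$-cycle containing $x$" — are all answered in $O(1)$ by a lookup in $CDS(G,\pi)$, by Lemma~\ref{compute-cds}'s specification. The remaining conditions only inspect colors of edges incident to $x,y$ and their neighbors $z,v,v_1,v_2$, all at distance $\le 1$ from $e$, of which there are $O(1)$; and detecting type $0$ only needs the colors of edges at distance $\le 2$, again $O(1)$ many. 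So iterating over all candidate (pairs/quadruples of) colors and all candidate choices of $z,v,v_1,v_2$ among the $O(1)$ nearby vertices takes constant time.

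The main obstacle I anticipate is the completeness half of part (i): showing that \emph{some} type always applies. One has to be sure the case analysis of the Vizing extension procedure is exhaustive — in particular that when all the $(\alpha,b)$-chains from $y$ loop back to $x$, the butterfly's $3$-vertex $z$ together with its two edges into the fan around $x$ always supplies a vertex with the free colors needed to run the two-step recoloring, and that the subtle distinction between a $2$-path $xvz$ (types $4,5$) and a $4$-cycle $xv_1zv_2$ (type $6$) covers both $B_1$ and $B_2$. Getting the free-color bookkeeping right (which color is free where after each kempe, and that kemping a chain inside a cycle does not disturb the other structures) is the delicate part; everything else is routine.
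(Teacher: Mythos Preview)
Your overall strategy matches the paper's: attempt the Vizing-style extension and, at each obstruction, match the current configuration to one of the seven types. However, the proposal contains a structural error about the butterfly and misses the paper's main organizational simplification, and both would derail the case analysis if carried out as written.

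First, the error: in both $B_1$ and $B_2$, the $3$-vertex $z$ is \emph{not} adjacent to $y$. The neighbors of $y$ besides $x$ are $v_1,v_3$ in $B_1$ and $v_1,v_4$ in $B_2$; the vertex $z$ is the \emph{other} $3$-neighbor of $x$, with $N(z)=\{x,v_1,v_2\}$ in $B_1$ and $N(z)=\{x,v_2,v_3\}$ in $B_2$. So your claim ``in particular one of $y$'s neighbors may be $z$'' is false, and the role you assign to $z$ in type $2_{ab}$ (``directly $z\in N_G(x)\cap N_G(y)$ from the butterfly'') is wrong. In the paper's proof, the common neighbor of $x$ and $y$ used for type $2_{ab}$ is $v_1$ (via the mapping $z\to v_1$, $c\to f$), while the butterfly's $z$ supplies the auxiliary vertex for types $3$--$6$ through the $4$-cycle $xv_2zv_3$ in $B_2$.

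Second, the missing simplification: the paper splits on whether the butterfly is $B_1$ or $B_2$ and proves that \emph{$B_1$ always yields type~$0$}. When $y$ and $z$ share the common neighbor $v_1$, one can always extend the coloring by recoloring only edges inside $B_1$; the paper gives an explicit three-case, multi-subcase analysis using swaps of the form $\textrm{swap}_\pi(xv_i,zv_i)$ and $\textrm{swap}_\pi(xv_i,yv_i)$. Only $B_2$ ever requires the chain-based types $1$--$6$, and there the cascade is driven specifically by the colors $\pi(yv_4),\pi(zv_2),\pi(zv_3)$ and by two blanket assumptions you also omit: for every $k\in\bar\pi(y)$ the $(f,k)$-chain joins $x$ to $y$ (else type $1_{fk}$), and for every $k\in\bar\pi(y)$ the $(b,k)$-chain through $x$ is a cycle (else type $2_{bk}$). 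Your plan treats $B_1$ and $B_2$ uniformly and never isolates these two invariants, which are precisely what make the remaining $B_2$ cases terminate in types $3$--$6$. Your part~(ii) is fine and matches the paper.
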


\begin{proof}
Let $B \subseteq G$ be any butterfly that makes $e$ the butterfly-like edge. To find it, we only have to look at vertices and edges at distance at most 1 from $e$. Their number is bounded by a constant, so this step takes constant time.

First, we consider the simpler case where $B$ is isomorphic to $B_1$. We show that $e$ has type 0 for any $\pi$. Let us recall that to prove that $e$ has type 0, we must show how to modify $\pi$ by (re)coloring only edges at distance at most 1 to obtain a partial 8-edge-coloring $\sigma$ which expands $\pi$ by $e$ that is $\sigma^{-1}([8]) = \pi^{-1}([8]) \cup \{e\}$. From now on, by $\pi(e) := a$, we mean changing $\pi$ by assigning color $a$ to $e$. By swap$_{\pi}(e_1$, $e_2$), where $e_1, e_2 \in E$ are colored edges, we mean modifying $\pi$ by swapping colors of $e_1$ and $e_2$. We also assume that kemping changes $\pi$ throughout the proof.

Assume that the vertices of $B$ are named just like their corresponding vertices of $B_1$, see Figure~\ref{computing-types-b1-app}. Without loss of generality, assume $\pi(xz) = a$, $\pi(xv_1) = b$, $\pi(xv_2) = c$, $\pi(xv_3) = d$, $\bar{\pi}(x) = \{f\}$, where $a$, $b$, $c$, $d$, $f$ are pairwise distinct.

\begin{figure}[ht]
	\begin{center}
	\scalebox{\scalefig}{
		\begin{tikzpicture}[scale=1.6]
			\tikzstyle{whitenode}=[draw,circle,fill=white,minimum size=14pt,inner sep=0pt]
			\draw (2,-3) node[whitenode] (v3) [label=left:$v_3$] {8}
			-- ++(135:1cm) node[whitenode] (y) [label=left:$y$] {3}
			-- ++(30:1.63cm) node[whitenode] (v1) [label=above:$v_1$] {8}
			-- ++(-90:0.8cm) node[whitenode] (x) [label=below:$x$] {8}
			-- ++(0:1.41cm) node[whitenode] (z) [label=right:$z$] {3}
			-- ++(-135:1cm) node[whitenode] (v2) [label=right:$v_2$] {8};
			\draw (x) edge [] node [label=left:] {} (y);
			\draw (z) edge [] node [label=left:] {} (v1);
			\draw (x) edge [] node [label=left:] {} (v2);
			\draw (x) edge [] node [label=left:] {} (v3);
			\draw (x) edge [draw=none] node [below] {$a$} (z);
			\draw (x) edge [draw=none] node [below] {$\bot$} (y);
			\draw (x) edge [draw=none] node [right,xshift=-1pt] {$b$} (v1);
			\draw (x) edge [draw=none] node [below, yshift=-2pt] {$c$} (v2);
			\draw (x) edge [draw=none] node [below, yshift=1pt] {$d$} (v3);
			\draw (x) edge [draw=none] node [left,pos=0.1] {$f$} (v1);
	\end{tikzpicture}}
	\end{center}
	\caption{Proof of~\reflemma{types-butterfly}. Case when $B$ is isomorphic to $B_1$.}
	\label{computing-types-b1-app}
\end{figure}

If $f \in \bar{\pi}(y)$, then $\pi(xy) := f$ gives $\sigma$. Hence, we can assume $f \in \pi(y)$.

Assume $f \in \bar{\pi}(z)$. If $a \in \bar{\pi}(y)$, then $\pi(xz) := f$ and $\pi(xy) := a$ give $\sigma$. Hence, we can assume $a \in \pi(y)$, which leaves us with two cases. In the first case, when $\pi(yv_1) = f$ and $\pi(yv_3) = a$, we can obtain $\sigma$ with swap$_{\pi}(yv_1, zv_1)$ and $\pi(xy) := f$. In the second case, when $\pi(yv_1) = a$ and $\pi(yv_3) = f$, we can obtain $\sigma$ with swap$_{\pi}(yv_1, zv_1)$, $\pi(xz) := f$, and $\pi(xy) := a$. From now on, we can assume $f \in \pi(z)$, which leaves us with three cases:
\begin{enumerate}
	\item $\pi(yv_3) = f, \pi(zv_1) = f$,
	\item $\pi(yv_3) = f, \pi(zv_2) = f$,
	\item $\pi(yv_1) = f, \pi(zv_2) = f$.
\end{enumerate}

For each of these cases, we show how we can obtain $\sigma$ in a few subcases.

In Case 1, if $\pi(zv_2) \neq b$, we do swap$_{\pi}(xv_1, zv_1)$ and $\pi(xy) := b$. If $\pi(zv_2) = b$ and $\pi(yv_1) \neq c$, we do swap$_{\pi}(xv_1, zv_1)$, swap$_{\pi}(xv_2, zv_2)$, and $\pi(xy) := c$. If $\pi(zv_2) = b$ and $\pi(yv_1) = c$, we do swap$_{\pi}(xv_3, yv_3)$, $\pi(xz) := d$, and $\pi(xy) := a$.

In Case 2, if $\pi(zv_1) = c$, we do swap$_{\pi}(xv_1, zv_1)$, swap$_{\pi}(xv_2, zv_2)$, and $\pi(xy) := b$. If $\pi(yv_1) = c$, we do swap$_{\pi}(yv_1, zv_1)$ to reach the previous subcase. If $\pi(zv_1) \neq c$ and $\pi(yv_1) \neq c$, we do swap$_{\pi}(xv_2, zv_2)$ and $\pi(xy) := c$.

In Case 3, if $\pi(yv_3) \neq c$ and $\pi(zv_1) \neq c$, we do swap$_{\pi}(xv_2, zv_2)$ and $\pi(xy) := c$. If $\pi(yv_3) = c$, we do swap$_{\pi}(xv_1, yv_1)$, $\pi(xz) := b$, and $\pi(xy) := a$. Finally, we have $\pi(yv_3) \neq c$ and $\pi(zv_1) = c$. We consider two further subcases. If $\pi(yv_3) \neq b$, we do swap$_{\pi}(xv_1, zv_1)$, swap$_{\pi}(xv_2, zv_2)$, and $\pi(xy) := b$. If $\pi(yv_3) = b$, we do swap$_{\pi}(xv_1, yv_1)$, swap$_{\pi}(xv_3, yv_3)$, $\pi(xz) := d$, and $\pi(xy) := a$.

\begin{figure}[h]
\centering
	\scalebox{\scalefig}{
	\begin{tikzpicture}[scale=1.6]
		\tikzstyle{whitenode}=[draw,circle,fill=white,minimum size=14pt,inner sep=0pt]
		\draw (2,-3) node[whitenode] (v4) [label=left:$v_4$] {8}
		-- ++(135:1cm) node[whitenode] (y) [label=left:$y$] {3}
		-- ++(45:1cm) node[whitenode] (v1) [label=above:$v_1$] {8}
		-- ++(-45:1cm) node[whitenode] (x) [label=above:$x$] {8}
		-- ++(45:1cm) node[whitenode] (v2) [label=above:$v_2$] {8}
		-- ++(-45:1cm) node[whitenode] (z) [label=right:$z$] {3}
		-- ++(-135:1cm) node[whitenode] (v3) [label=right:$v_3$] {8};
		\draw (x) edge [] node [label=left:] {} (y);
		\draw (x) edge [] node [label=left:] {} (z);
		\draw (x) edge [] node [label=left:] {} (v3);
		\draw (x) edge [] node [label=left:] {} (v4);
		\draw (x) edge [draw=none] node [below] {$a$} (z);
		\draw (x) edge [draw=none] node [below] {$\bot$} (y);
		\draw (x) edge [draw=none] node [above, xshift=2] {$b$} (v1);
		\draw (x) edge [draw=none] node [above, xshift=-2] {$c$} (v2);
		\draw (x) edge [draw=none] node [below, xshift=-2pt] {$d$} (v3);
		\draw (y) edge [draw=none] node [above, pos=0.4] {$f$} (v1);
		\draw (x) edge [draw=none] node [below,pos=0,xshift=-5pt] {$f$} (v3);
	\end{tikzpicture}}
	\caption{Proof of~\reflemma{types-butterfly}. Case when $B$ is isomorphic to $B_2$.}
	\label{computing-types-b2}
\end{figure}

Now, consider the case where $B$ is isomorphic to $B_2$. Assume that the vertices of $B$ are named just like their corresponding vertices of $B_2$, see Figure~\ref{computing-types-b2}. The set $\bar{\pi}(x)$ contains only one color, denoted by $f$. If $f \in \bar{\pi}(y)$, then $\pi(xy) := f$ gives $\sigma$, so $e$ has type 0. Hence, we need to consider only the case where $f \in \pi(y)$. By symmetry, we can assume $\pi(yv_1) = f$.

For any $k \in \bar{\pi}(y)$, we can assume there is an $(f, k)$-chain with endpoints $x$ and $y$. Otherwise, $e$ has type $1_{fk}$.

Without loss of generality, assume $\pi(xz) = a$, $\pi(xv_1) = b$, $\pi(xv_2) = c$, and $\pi(xv_3) = d$, where $a$, $b$, $c$, $d$, $f$ are pairwise distinct.

Assume $\pi(yv_4) \neq a$. If $f \in \bar{\pi}(z)$, setting $\pi(xz) := f$ and $\pi(xy) := a$ gives $\sigma$, so $e$ has type 0. Hence, we can assume $f \in \pi(z)$. There is $k \in [8]$ such that $k \in \pi(x)$, $k \in \bar{\pi}(y)$, and $k \in \bar{\pi}(z)$ because $|\pi(x)| = 7$ and $|\pi(y) \cup \pi(z)| \leq 4$. We have already assumed that there is an $(f, k)$-chain with endpoints $x$ and $y$. Hence, $e$ has type $3_{fk}$. To see this, map the vertices/colors from the definition of $3_{ab}$ as follows: $x \to x, y \to y, z \to z, a \to f, b \to k, c \to a$. From now on, we can assume $\pi(yv_4) = a$.

Since $\pi(y) = \{a, f\}$, we conclude that $b \in \bar{\pi}(y)$. Therefore, for any $k \in \bar{\pi}(y)$, we can assume the $(b, k)$-chain containing $x$ is a cycle. Otherwise, $e$ has type $2_{bk}$. To see this, map the vertices/colors from the definition of $2_{ab}$ as follows: $x \to x, y \to y, z \to v_1, a \to b, b \to k, c \to f$.

Assume $f \in \pi(z)$. By symmetry, we can assume $\pi(zv_2) = f$. If $\pi(zv_3) \neq c$, then swap$_\pi(xv_2, zv_2)$, $\pi(xy) := c$ gives $\sigma$, so $e$ has type 0. If $\pi(zv_3) = c$, then $e$ has type $4_{cb}$ by the assumption that the $(b, c)$-chain containing $x$ is a cycle. To see this, map the vertices/colors from the definition of $4_{ab}$ as follows: $x \to x, y \to y, z \to z, v \to v_2, a \to c, b \to b, c \to f$. Hence, $f \in \bar{\pi}(z)$.

Assume $\pi(zv_3) \neq c$. Let $k = \pi(zv_2)$. Since $k \in \bar{\pi}(y)$, $e$ has type $5_{kf}$ by the assumption that there is an $(f, k)$-chain with endpoints $x$ and $y$. To see this, map the vertices/colors from the definition of $5_{ab}$ as follows: $x \to x, y \to y, z \to z, v \to v_2, a \to k, b \to f, c \to c$. Hence, $\pi(zv_3) = c$.

Assume $\pi(zv_2) \neq d$. Let $k = \pi(zv_2)$. Since $c \in \bar{\pi}(y)$, as in the previous case, $e$ has type $5_{kf}$. To see this, map the vertices/colors from the definition of $5_{ab}$ as follows: $x \to x, y \to y, z \to z, v \to v_3, a \to c, b \to f, c \to d$. Hence, $\pi(zv_2) = d$.

Now, we know that $\pi(zv_3) = c$ and $\pi(zv_2) = d$. By the assumptions that there is an $(f, d)$-chain with endpoints $x$ and $y$, and a $(b, c)$-cycle containing $x$, $e$ has type $6_{dfcb}$. To see this, map the vertices/colors from the definition of $6_{abcd}$ as follows: $x \to x, y \to y, z \to z, v_1 \to v_2, v_2 \to v_3, a \to d, b \to f, c \to c, d \to b$. This completes the proof that $e$ has a type.

Consider an algorithm that follows the proof above and finds the type of $e$. The algorithm only checks a bounded number of conditions involving checking color of an edge or existence of a chain. Checking any of these conditions is easy to do in constant time. In particular, checking any chain-related condition is done using $CDS(G, \pi)$.
\end{proof}

Before we show how we can expand 8-edge-coloring by multiple butterfly-like edges at once, we need to find a way to avoid expanding edge-coloring by edges of the types $2_{ab}$ for different $a, b \in [8]$. The issue with these types is that it is not clear whether we can expand edge-coloring by multiple edges of one of these types efficiently. The solution is that if we ever decide to expand edge-coloring by edges of type $2_{ab}$ for some $a, b \in [8]$, we will instead change the 8-edge-coloring so that all these edges receive type $1_{ab}$, and then proceed as if these edges had type $1_{ab}$ in the beginning.

\begin{lemma}
\label{no-type-2}
    Let $G = (V, E)$ be a graph such that $\Delta(G) \leq 8$ and let $\pi$ be its partial 8-edge-coloring. Let $E_{ab} \subseteq E$ be a set of uncolored butterfly-like edges of type $2_{ab}$ for different $a, b\in [8]$ that are pairwise 2-independent. Then, we can find in time $O(|E_{ab}|)$ a partial 8-edge-coloring $\sigma$ of $G$ such that $\sigma^{-1}([8]) = \pi^{-1}([8])$ and all the edges in $E_{ab}$ have type $1_{ab}$.
\end{lemma}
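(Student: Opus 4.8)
The plan is to process the edges of $E_{ab}$ one by one, and for each edge $e = xy \in E_{ab}$, perform a local recoloring (confined to edges at distance at most $1$ from $e$) that turns $e$ into an edge of type $1_{ab}$ while preserving the set of colored edges. Recall that, by the definition of type $2_{ab}$, we have $a,b \in \bar\pi(y)$, and there is a common neighbor $z \in N_G(x) \cap N_G(y)$ together with a color $c \in \bar\pi(x) \setminus \{b\}$ such that $\pi(yz) = c$, $\pi(xz) = a$, and $xz$ does not lie on an $(a,b)$-cycle. The natural move is \textup{swap}$_\pi(xz, yz)$, i.e.\ set $\pi(xz) := c$ and $\pi(yz) := a$. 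After this swap, $a$ becomes free at $x$ (it was used only by $xz$ among edges incident to $x$, since $a \notin \pi(x)$ otherwise — wait, we must check this), and $b$ is still free at $y$. So I would first verify that $a \in \bar\pi(x)$ before the swap is \emph{not} required: actually $a = \pi(xz)$, so $a \in \pi(x)$; the point of the swap is exactly to move color $a$ off of $x$. After the swap, $a \in \bar\pi(x)$ and $b \in \bar\pi(y)$, so to conclude $e$ has type $1_{ab}$ it remains to check that $x$ is not an endpoint of the $(a,b)$-chain starting at $y$ in the new coloring.

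The key step is therefore to argue that this chain condition holds after the swap. In the new coloring, the edges colored $a$ or $b$ differ from the old ones only around $x$, $y$, $z$: edge $xz$ changed from $a$ to $c$, and edge $yz$ changed from $c$ to $a$. Before the swap, $xz$ was on some $(a,b)$-chain $P$ that was not a cycle — so $P$ is a path, and since $x$ is incident to $xz$ (colored $a$) but to no $b$-colored edge ($b \in \bar\pi(y)$ but we need $b \in \bar\pi(x)$ too; note $b \neq a = \pi(xz)$ and if $b$ were used at $x$ the chain through $x$ would continue, which is fine) — the structure of $P$ near $x$ and $z$ determines where the new $(a,b)$-chain from $y$ goes. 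Since after the swap $yz$ is colored $a$ and $z$ had color $a$ on $xz$ before, the $(a,b)$-chain starting at $y$ now goes $y \to z \to \cdots$ along what was the tail of $P$ from $z$ away from $x$; because $P$ was a path and not a cycle, this tail does not return to $x$. This is the heart of the argument and the main obstacle: I must carefully track the local reconfiguration of $a/b$-colored edges and rule out the degenerate possibilities (e.g.\ $b$ used at $z$, the chain $P$ being a single edge, $x$ and $y$ already being the two endpoints of $P$). The hypothesis that $xz$ is not on an $(a,b)$-cycle is exactly what prevents the new chain from $y$ from closing up through $x$.

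For the running time and correctness of processing all of $E_{ab}$ at once: since the edges of $E_{ab}$ are pairwise $2$-independent, the sets of edges at distance at most $1$ from distinct $e, e' \in E_{ab}$ are vertex-disjoint, hence the local swaps do not interfere — performing all of them yields a valid partial $8$-edge-coloring $\sigma$ with $\sigma^{-1}([8]) = \pi^{-1}([8])$, and each edge $e \in E_{ab}$ has type $1_{ab}$ in $\sigma$ (the type condition for $e$ depends only on colors of edges at distance at most $2$ from $e$, and $2$-independence does not guarantee distance-$2$ neighborhoods are disjoint — but the only modifications happen within distance $1$ of the edges of $E_{ab}$, so I should double check that a modification near $e'$ cannot destroy the type-$1_{ab}$ property just established at $e$; if $2$-independence is insufficient here one uses that the swap near $e'$ only permutes colors $a,c$ on two specific edges, which lies at distance $>1$ from $e$ and cannot affect whether $x$ is an endpoint of the $(a,b)$-chain from $y$ unless that chain passes through the modified region — and if it does, the $(a,b)$-chain structure is still globally consistent because all swaps are simultaneous $a \leftrightarrow c$ transpositions on edges not colored $b$). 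Each individual swap and the bookkeeping take $O(1)$ time, so the total is $O(|E_{ab}|)$. I would finish by noting the construction never consults or modifies a chain data structure, so no dependence on $CDS(G,\pi)$ is needed here.
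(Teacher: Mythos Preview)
Your swap $\mathrm{swap}_\pi(xz,yz)$ is exactly the operation the paper uses, and your analysis that after this swap \emph{alone} the edge $e$ acquires type $1_{ab}$ is correct: in the new coloring the $(a,b)$-chain from $y$ is $y$--$z$ followed by the tail of the old chain $P$ from $z$ away from $x$, and since $P$ was not a cycle this tail cannot return to $x$.

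The genuine gap is in your last paragraph. Local validity of the resulting coloring is fine (the modified edges $xz,yz$ for distinct members of $E_{ab}$ really are vertex-disjoint by $2$-independence), but the type-$1_{ab}$ condition is \emph{global}: it asks that the $(a,b)$-chain from $y$ not terminate at $x$, and that chain may pass through the swap sites of many other edges of $E_{ab}$. Your justification (``globally consistent because all swaps are simultaneous $a\leftrightarrow c$ transpositions on edges not colored $b$'') does not establish this: the color $c$ varies from edge to edge, and in any case the sentence says nothing about why the rerouted chain from $y$ still avoids $x$. Symmetrically, if you process the edges one by one, you also need that each unprocessed edge \emph{retains} type $2_{ab}$ when you reach it (its $x'z'$ must still lie on no $(a,b)$-cycle), and you do not address this at all.

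The paper closes both gaps with a short invariant argument. Process the edges sequentially, maintaining that all processed edges have type $1_{ab}$ and all unprocessed edges have type $2_{ab}$. After the swap at $xy$: since $a$ becomes free at $x$ and $b$ stays free at $y$, any $(a,b)$-chain through the modified region is a path with $x$ or $y$ as an endpoint, so no new $(a,b)$-cycle arises and every unprocessed edge keeps type $2_{ab}$. For the processed edges, the only change to $G_{ab}$ is removing $xz$ (which can only split chains) and adding $yz$; because $a,b$ were both free at $y$ beforehand, $y$ was isolated in $G_{ab}$, so the new edge $yz$ merely extends one chain to a new endpoint $y$ and cannot merge two previously separate $(a,b)$-chains. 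Hence no already-processed edge can lose type $1_{ab}$. These two observations are precisely what your proposal is missing.
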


\begin{proof}
	We provide the algorithm. It begins with $X = E_{ab}$ and $Y = \emptyset$. Then, it moves the edges from $X$ to $Y$ one by one without changing the set of colored edges of $G$. The algorithm also maintains the invariant that all the edges in $X$ are of type $2_{ab}$ while all the edges in $Y$ are of type $1_{ab}$.
	
	Assume the invariant holds. Pick $xy \in X$. Since $xy$ is of type $2_{ab}$, there is a color $c \in \bar{\pi}(x) \setminus \{b\}$ and a vertex $z \in N_G(x) \cap N_G(y)$ such that $\pi(yz) = c$, $\pi(xz) = a$, and $xz$ does not belong to an $(a, b)$-cycle. We can find $c$ and $z$ in $O(1)$ time. Then, we swap the colors of $yz$ and $xz$. Note that we get a proper partial 8-edge-coloring, and there is no $(a, b)$-chain from $x$ to $y$, for otherwise before the swap $xz$ was on an $(a, b)$-cycle. Hence, $xy$ is of type $1_{ab}$.
	
	We claim that after the swap all the edges in $X \setminus \{xy\}$ stay of type $2_{ab}$ while all the edges in $Y$ stay of type $1_{ab}$. Indeed, since the edges in $E_{ab}$ are pairwise 2-independent, the swap of colors does not change the colors of edges at distance 1 from any of the edges in $X \setminus \{xy\}$ or $Y$. Moreover, since after the swap $a$ is free at $x$ and $b$ is free at $y$, no new $(a, b)$-cycle is created, so all the edges in $X \setminus \{xy\}$ are of type $2_{ab}$. Also, an edge $e$ of $Y$ could lose type $1_{ab}$ only if $zy$ (which got colored $a$) connected two $(a, b)$-chains each ending in $e$, which is impossible because $a, b$ were free at $y$. It follows that the invariant defined at the beginning of the proof remains satisfied after one step. The invariant implies that after $|E_{ab}|$ steps we get the desired coloring.
\end{proof}

\subsection{Reducing butterfly-like edges}

There is one last constraint on the butterfly-like edges that we need to introduce to enable expanding 8-edge-coloring by multiple butterfly-like edges efficiently. Let $G = (V, E)$ be a graph such that $\Delta(G) \leq 8$ and let $\pi$ be its partial 8-edge-coloring. For different $a, b \in [8]$, edges $e_1, e_2 \in E$ are \emph{$(a, b)$-chain dependent} when there are different vertices $v_1, v_2 \in V$ at distance at most 1 from endpoints of $e_1$ and $e_2$, respectively, such that there is an $(a, b)$-chain with endpoints $v_1$ and $v_2$ with respect to $\pi$. Otherwise, $e_1$ and $e_2$ are \emph{$(a, b)$-chain independent}. Moreover, if $e_1, e_2 \in E$ are butterfly-like edges of the same type $T$, then they are \emph{$T$-chain independent} when one of the following holds:
\begin{itemize}
    \item $T = \textrm{0}$,
    \item $T \in \{1_{ab}, 3_{ab}, 4_{ab}, 5_{ab}\}$ for some $a, b \in [8]$, and $e_1, e_2$ are $(a, b)$-chain independent,
    \item $T = 6_{abcd}$ for some $a, b, c, d \in [8]$, and $e_1, e_2$ are $(a, b)$-chain independent and $(c, d)$-chain independent.
\end{itemize}

We are ready to show how we can expand 8-edge-coloring by multiple butterfly-like edges of the same type at once.

\begin{lemma}
    Let $G = (V, E)$ be an $n$-vertex graph such that $\Delta(G) \leq 8$ and let $\pi$ be its partial 8-edge-coloring. Let $E_T \subseteq E(G)$ be a set of uncolored butterfly-like edges of the same type $T \in \{0, 1_{ab}, 3_{ab}, 4_{ab}, 5_{ab}, 6_{abcd}\}$ for some $a, b, c, d \in [8]$ that are pairwise 4-independent and $T$-chain independent. Then, in time $O(n)$ we can find a partial 8-edge-coloring $\sigma$ of $G$ such that $\sigma^{-1}([8]) = \pi^{-1}([8]) \cup E_T$.
\label{single-type-color-many}
\end{lemma}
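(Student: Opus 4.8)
The plan is to handle each type separately, but with a uniform strategy: for every edge $e = xy \in E_T$, the proof of \reflemma{types-butterfly} implicitly gives a recoloring recipe $R(e)$ that turns $\pi$ into a coloring that colors $e$, touching only (i) edges at distance at most $1$ from $e$, and (ii) at most one chain (for $T\in\{1_{ab},3_{ab},4_{ab},5_{ab}\}$, an $(a,b)$-chain; for $T=6_{abcd}$, one $(a,b)$-chain and one $(c,d)$-cycle; for $T=0$, no chain at all). The first step is to make this precise: I would state, for each type, exactly which edges get recolored and which chain (if any) gets kemped, reading it off the case analysis in \reflemma{types-butterfly} (for type $0$, off the $B_1$/$B_2$ subcase analysis). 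Call $E_{\mathrm{loc}}(e)$ the set of edges at distance at most $1$ from $e$ that are read or written by $R(e)$, and $P(e)$ the chain(s) kemped by $R(e)$. Since the $e\in E_T$ are pairwise $4$-independent, the sets $E_{\mathrm{loc}}(e)$, together with the edges incident to them, are pairwise disjoint and pairwise non-incident; since they are $T$-chain independent, for distinct $e_1,e_2\in E_T$ the chain(s) $P(e_1)$ and $P(e_2)$ are vertex-disjoint, and moreover $P(e_1)$ is disjoint from and non-incident to $E_{\mathrm{loc}}(e_2)$ (an endpoint of $P(e_1)$ lies within distance $1$ of an endpoint of $e_1$, and $(a,b)$-chain independence forbids any vertex within distance $1$ of an endpoint of $e_2$ from being such an endpoint; the internal vertices of $P(e_1)$ have all their incident $a$- and $b$-colored edges inside $P(e_1)$).

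The second step is to argue that performing all the recipes simultaneously yields a proper coloring. Here I would process $E_T$ one edge at a time, maintaining the invariant that every not-yet-processed $e\in E_T$ is still an uncolored butterfly-like edge of type $T$ and still $T$-chain independent from the others, and that all previously processed edges remain properly colored. The key point, exactly as in the proof of \reflemma{no-type-2}: applying $R(e)$ changes colors only on $E_{\mathrm{loc}}(e)\cup E(P(e))$, which by the disjointness established above is untouched by and non-incident to everything relevant to the other edges; kemping $P(e)$ cannot merge or split any chain that matters for another $e'$ because its endpoints are ``far'' and its interior is ``closed''; and since after $R(e)$ the colors that were free at the endpoints stay appropriately free, no new offending chain/cycle is created through a neighbourhood of another $e'$. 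So the invariant is preserved, and after $|E_T|$ steps we obtain $\sigma$ with $\sigma^{-1}([8]) = \pi^{-1}([8])\cup E_T$.

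The third step is the running time. Computing $CDS(G,\pi)$ takes $O(n)$ by \reflemma{compute-cds}. For each $e\in E_T$, finding a witnessing butterfly and the recipe $R(e)$ (including all chain queries) takes $O(1)$ by \reflemma{types-butterfly}. Executing $R(e)$ costs $O(1)$ for the local edge updates, plus the cost of kemping $P(e)$, which is $O(|E(P(e))|)$. Because the chains $P(e)$ are pairwise vertex-disjoint (and edge-disjoint), $\sum_{e\in E_T}|E(P(e))| \le |E| = O(n)$, so the total kemping cost is $O(n)$. Hence the whole procedure runs in $O(n)$. Note the $CDS$ is used only to plan the recipes; we do not need to maintain it under the updates, since each recipe is determined before any other recipe is applied and the disjointness guarantees no recipe invalidates another's plan.

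The main obstacle I expect is the bookkeeping in step two for type $6_{abcd}$ and, more delicately, for type $0$: for type $0$ one must check that the $B_1$/$B_2$ subcase recipes genuinely touch only edges within distance $1$ of $e$ (so that $4$-independence alone suffices, with no chain-independence needed), and for type $6$ one must verify that kemping the $(a,b)$-chain and then (if the recipe calls for it) the $(c,d)$-cycle at $e$ does not interfere with either chain of another edge $e'$ — this is where both the $(a,b)$-chain independence and the $(c,d)$-chain independence hypotheses are consumed, and one has to be careful that the two kempings at a single $e$ do not interact badly with each other either. Establishing the precise ``disjoint and non-incident'' statement relating $E_{\mathrm{loc}}(e_1)$, $P(e_1)$, $E_{\mathrm{loc}}(e_2)$, $P(e_2)$ is the technical heart; once it is in place, the simultaneous-application and timing arguments are routine.
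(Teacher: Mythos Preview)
Your overall strategy coincides with the paper's (specify a per-edge recipe, argue pairwise non-interference, sum the costs; whether one merges the resulting colorings simultaneously via a ``coherence'' notion or processes sequentially under an invariant is immaterial), but the non-interference argument has a real gap. Your parenthetical for ``$P(e_1)$ is disjoint from and non-incident to $E_{\mathrm{loc}}(e_2)$'' establishes only that no \emph{endpoint} of $P(e_1)$ lies within distance $1$ of $e_2$; nothing in $4$-independence or $T$-chain independence by itself rules out an \emph{internal} vertex of $P(e_1)$ landing among the local vertices $V_{e_2}$, and your remark that internal vertices have both their $a$- and $b$-edges inside $P(e_1)$ does not help (if anything it shows that a local recoloring at such a vertex to $a$ or $b$ \emph{would} clash). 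The paper closes this by a genuine per-type case analysis: for every $u\in V_{e'}$ the type definition pins $u$ to a specific identifiable $(a,b)$-chain --- either $u$ has $a$ or $b$ free and is therefore the endpoint of $P_{e'}$ or of the $x'$--$y'$ chain built into the type, or $u$ lies on the $(a,b)$-cycle built into the type --- and that chain is then argued to be distinct from $P_e$. You correctly flag this step as ``the technical heart,'' but the generic argument you sketch will not do it; the type-specific free-color information on the vertices of $V_{e'}$ is what makes it go through.

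Two further points on $T=6_{abcd}$. First, the second kemped object is a $(c,d)$-\emph{path} $R_e$ starting at $z$, not the cycle; the $(c,d)$-cycle through $x$ is precisely what guarantees $R_e$ avoids $x$. Second, ``disjoint and non-incident'' is too strong a target here: the paper explicitly notes that $P_e$ and $R_{e'}$ need not be $1$-independent, and instead uses that kemping on the disjoint color pairs $\{a,b\}$ and $\{c,d\}$ remains coherent even when the two chains touch. Your framework, which aims for $0$-independence of all changed-edge sets, must be relaxed at exactly this point.
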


\begin{proof}
For any $e \in E_T$, we will show how to find a partial 8-edge-coloring $\sigma_e$ such that $\sigma_e^{-1}([8]) = \pi^{-1}([8]) \cup \{e\}$. The algorithm will only compute $\sigma_e|_{E_e}$, where $E_e = \{e' \in E : \pi(e') \neq \sigma_e(e')\}$, which will allow us to achieve time $O(|E_e|)$ for every $e \in E_T$.

In the proof, we will also argue that, for every different $e, e' \in E_T$, $\sigma_e$ and $\sigma_{e'}$ are \emph{coherent} that is $E_e \cap E_{e'} = \emptyset$, and $\sigma_e(e_1) \neq \sigma_{e'}(e_1')$ for any pair of incident edges $e_1 \in E_e, e_{1}' \in E_{e'}$.

As a consequence of the edges in $E_T$ being pairwise coherent, we get that partial 8-edge-coloring $\sigma$ defined as

\[
\sigma(e) = \left\{\begin{array}{ll}
\sigma_{e'}(e) & \textrm{ if } e \in E_{e'} \textrm{ for some } e' \in E_T;\\
\pi(e) & \textrm{ otherwise};\\
\end{array} \right.
\]

is a proper partial 8-edge-coloring of $G$ such that $\sigma^{-1}([8]) = \pi^{-1}([8]) \cup E_T$. We prove that $\sigma$ is proper by contradiction. Suppose there are incident edges $e_1, e_2 \in E$ such that $\sigma(e_1) = \sigma(e_2)$. If $e_1, e_2 \not\in E_e$ for every $e \in E_T$, then $\sigma(e_1) = \sigma(e_2)$ contradicts $\pi$ being a proper partial 8-edge-coloring. If $e_1 \in E_e$ for some $e \in E_T$ and $e_2 \not\in E_{e'}$ for every $e' \in E_T$ (or vice versa), then $\sigma(e_1) = \sigma(e_2)$ contradicts $\sigma_e$ being a proper partial 8-edge-coloring. When both $e_1, e_2 \in E_e$ for some $e \in E_T$, we use the same argument. Finally, if $e_1 \in E_e$ and $e_2 \in E_{e'}$ for some different $e,e' \in E_T$, then $\sigma(e_1) = \sigma(e_2)$ contradicts $\sigma_e$ and $\sigma_{e'}$ being coherent.

By another consequence of pairwise coherence, if $\sigma_e|_{E_e}$ is computed in time $O(|E_e|)$ for every $e \in E_T$, then the total time required to compute $\sigma$ is $O\left(\sum_{e \in E_T} |E_e|\right) = O(n)$. We can first compute $\sigma_e |_{E_e}$ for all $e \in E_T$ in time $O(n)$, and then compute $\sigma$ in time $O(n)$.

To summarize, to finish the proof, we will do the following:
\begin{itemize}
    \item for $e \in E_T$, give an algorithm finding $\sigma_e |_{E_e}$ that runs in time $O(|E_e|)$,
    \item for different $e, e' \in E_T$, show that $\sigma_e$ and $\sigma_{e'}$ are coherent.
\end{itemize}

We will consider multiple cases depending on $T$.

Assume $T = 0$. The number of edges at distance at most 1 from $e$ is $O(1)$ because $\Delta(G) \leq 8$. Hence, in constant time we can iterate over all possible ways to modify $\pi$ by coloring $e$ and recoloring some edges at distance at most 1 from $e$. We will find a proper 8-edge-coloring, $\sigma_e$, by the definition of type 0. Also, the edges in $E_T$ are 4-independent, so $E_e$ and $E_{e'}$ are 1-independent and thus $\sigma_e$ and $\sigma_{e'}$ are coherent.

Assume $T \neq 0$. We name the vertices giving $e$ type $T$ in the same way as in the definitions of the types. For example, if $T = 3_{ab}$ for different $a, b \in [8]$, then $e = xy$, where $a \in \bar{\pi}(x)$ and $a \in \pi(y)$, and $z \in N_G(x)$ satisfies $b \in \bar{\pi}(z)$ and $\pi(xz) = c$ for some $c \in \bar{\pi}(y)$. For any $T$ we can find the vertices giving $e$ type $T$ in constant time, since looking at the colors of edges at distance at most 1 is sufficient to identify them (note that we do not have to check conditions related to chains because we already know $e$ has type $T$). Let

\[
V_e = \left\{\begin{array}{ll}
\{x, y\} &  \textrm{ if } T = 1_{ab} \textrm{ for some } a, b \in [8];\\
\{x, y, z\} & \textrm{ if } T = 3_{ab} \textrm{ for some } a, b \in [8];\\
\{x, y, z, v\} & \textrm{ if } T \in \{4_{ab}, 5_{ab}\} \textrm{ for some } a, b \in [8];\\
\{x, y, z, v_1, v_2\} & \textrm{ if } T = 6_{abcd} \textrm{ for some } a, b, c, d \in [8];\\
\end{array} \right.
\]

Similarly, $\{x', y'\}$ if $T = 1_{ab}$ for some $a, b \in [8]$, etc.

For now, assume $T \in \{1_{ab}, 3_{ab}, 4_{ab}, 5_{ab}\}$ for some different $a, b \in [8]$. We will consider the case $T = 6_{abcd}$ for some $a, b, c, d \in [8]$ at the end. While presenting the algorithm finding $\sigma_e|_{E_e}$ for different $T$, we will additionally argue that for any $T$ the set $E_e$ is a union of:
\begin{itemize}
    \item $E_{V_e}$ --- a set of (not necessarily all) edges whose both endpoints are in $V_e$,
    \item $E_{P_e}$ --- a set of all the edges of a single $(a, b)$-chain $P_e$ that has $y$ or $z$ as an endpoint and does not contain an edge incident with any vertex of $V_{e'}$.
\end{itemize}

We now show that the condition above implies that $E_e$ and $E_{e'}$ are 0-independent, and thus $\sigma_e$ and $\sigma_{e'}$ are coherent. First, $E_{V_e}$ and $E_{V_{e'}}$ are 0-independent because the edges in $E_T$ are 4-independent and the edges in $E_{V_e}$ and $E_{V_{e'}}$ are at distance at most 1 from $e$ and $e'$, respectively. Second, $E_{P_e}$ and $E_{V_{e'}}$ are 0-independent by the definition of $E_{P_e}$. Similarly, $E_{V_e}$ and $E_{P_{e'}}$ are 0-independent. Finally, $E_{P_e}$ and $E_{P_{e'}}$ would not be 0-independent only if $P_e$ and $P_{e'}$ were the same chain, since both $P_e$ and $P_{e'}$ are $(a, b)$-chains. However, this case is impossible. Indeed, recall that $P_e$ and $P_{e'}$ respectively have an endpoint at distance at most 1 from $e$ and $e'$.
These endpoints are different by $4$-independence of the edges in $E_T$.
Hence if $P_e=P_{e'}$ then $e$ and $e'$ are $T$-chain dependent, a contradiction.

We show below how to find $\sigma_e|_{E_e}$ in time $O(|E_e|)$ for $T \in \{1_{ab}, 3_{ab}, 4_{ab}, 5_{ab}\}$ for different $a, b \in [8]$. We make sure that for each $T$, the chain $P_e$ has $y$ or $z$ as an endpoint. We also argue that $P_e$ does not contain an edge incident with any vertex of $V_{e'}$. We do not specify $E_{V_e}$ precisely, but it is easy to check that, apart from the edges in $E_{P_e}$, $\sigma_e$ differs from $\pi$ only for edges with both endpoints in $V_e$.

Assume $T = 1_{ab}$. To get $\sigma_e$ from $\pi$, we can kempe the $(a, b)$-chain $P_e$ starting at $y$ and set $\pi(xy) := a$. $P_e$ cannot contain an edge incident with $x'$ or $y'$ because $a \in \bar{\pi}(x')$, $b \in \bar{\pi}(y')$, and $P_e \neq P_{e'}$.

Assume $T = 3_{ab}$ for different $a, b \in [8]$. To get $\sigma_e$ from $\pi$, we can kempe the $(a, b)$-chain $P_e$ starting at $z$, set $\pi(xz) := a$, and set $\pi(xy) := c$. Note that in the definition of $3_{ab}$ we assume there is an $(a, b)$-chain with endpoints $x$ and $y$ to ensure that $P_e$ does not contain $x$ or $y$. $P_{e'}$ starts at $z'$ and $P_e \neq P_{e'}$, so $P_e$ cannot contain an edge incident with $z'$. There is an $(a, b)$-chain with endpoints $x'$ and $y'$, which is different from $P_e$ because the edges in $E_T$ are 4-independent. Hence, $P_e$ cannot contain an edge incident with $x'$ or $y'$.

Assume $T = 4_{ab}$. To get $\sigma_e$ from $\pi$, we can kempe the $(a, b)$-chain $P_e$ starting at $z$, swap$_{\pi}(xv, zv)$, and set $\pi(xy) := a$. $P_{e'}$ starts at $z'$ and $P_e \neq P_{e'}$, so $P_e$ cannot contain an edge incident with $z'$. Since $a, b \in \bar{\pi}(y')$, $P_e$ cannot contain an edge incident with $y'$. There is an $(a, b)$-cycle containing $x'$ and $v'$, which is different from $P_e$ because $P_e$ is not a cycle. Hence, $P_e$ cannot contain an edge incident with $x'$ or $v'$.

Assume $T = 5_{ab}$. To get $\sigma_e$ from $\pi$, we can kempe the $(a, b)$-chain $P_e$ starting at $z$ (in particular, we set $\pi(zv) := b$), swap$_{\pi}(xv, zv)$, and set $\pi(xy) := c$. $P_{e'}$ starts at $z'$, contains $v'$, and is different from $P_e$, so $P_e$ cannot contain an edge incident with $z'$ or $v'$. There is an $(a, b)$-chain with endpoints $x'$ and $y'$, which is different from $P_e$ because the edges in $E_T$ are 4-independent. Hence, $P_e$ cannot contain an edge incident with $x'$ or $y'$.

We are left with the case $T = 6_{abcd}$ for different $a, b, c, d \in [8]$. Here, $E_e = E_{V_e} \cup E_{P_e} \cup E_{R_e}$, where $E_{R_e}$ is the set of edges of an $(c, d)$-chain $R_e$ that has $z$ as an endpoint and does not contain an edge incident with any vertex of $V_{e'}$. In the definition of $\sigma_e$, we will kempe the chain $R_e$ just like $P_e$.

We prove that with the extended definition of $E_e$, $\sigma_e$ and $\sigma_{e'}$ are coherent. The proof that $E_{V_e} \cup E_{P_e}$ and $E_{V_{e'}} \cup E_{P_{e'}}$ are 0-independent remains the same as for $T \in \{1_{ab}, \ldots, 5_{ab}\}$. $E_{R_e}$ and $E_{V_{e'}}$ are 0-independent by the definition of $E_{R_e}$. Similarly, $E_{V_e}$ and $E_{R_{e'}}$ are 0-independent. Just as before, we can argue that $R_e$ and $R_{e'}$ must be different chains, and thus $E_{R_e}$ and $E_{R_{e'}}$ are 0-independent by the assumption that the edges in $E_T$ are $T$-independent. $E_{P_e}$ and $E_{R_{e'}}$ are not necessarily 1-independent, but $\sigma_e |_{E_{P_e}}$ and $\sigma_{e'} |_{E_{R_{e'}}}$ are coherent because $P_e$ is an $(a, b)$-chain, $R_{e'}$ is an $(c, d)$-chain, and we kempe chains $P_e$ and $R_{e'}$ in the definitions of $\sigma_e$ and $\sigma_{e'}$, respectively. Similarly, $\sigma_e |_{E_{R_e}}$ and $\sigma_{e'} |_{E_{P_{e'}}}$ are coherent.

We show below how to find $\sigma_e$ in time $O(|E_e|)$ for $T = 6_{abcd}$ for different $a, b, c, d \in [8]$. We ensure that $P_e$ and $R_e$ have $z$ as an endpoint and argue that these chains do not contain an edge incident with any vertex of $V_{e'}$.

To get $\sigma_e$ from $\pi$, we can kempe the $(a, b)$-chain $P_e$ starting at $z$, kempe the $(c, d)$-chain $R_e$ starting at $z$, swap$_{\pi}(xv_1, zv_1)$, and set $\pi(xy) := c$. $P_{e'}$ starts at $z'$, contains $v_1'$ and is different from $P_e$, so $P_e$ cannot contain an edge incident with $z'$ or $v_1'$. There is an $(a, b)$-chain with endpoints $x'$ and $y'$ that contains $v_2'$ and is different from $P_e$ because the edges in $E_T$ are 4-independent. Hence, $P_e$ cannot contain an edge incident with $x'$, $y'$, or $v_2'$. $R_{e'}$ starts at $z'$, contains $v_2'$ and is different from $R_e$, so $R_e$ cannot contain an edge incident with $z'$ or $v_2'$. Since $c, d \in \bar{\pi}(y')$, $R_e$ cannot contain an edge incident with $y'$. There is an $(c, d)$-cycle containing $x'$ and $v_1'$, which is different from $R_e$ because $R_e$ is not a cycle. Hence, $R_e$ cannot contain an edge incident with $x'$ or $v_1'$. This completes the proof.
\end{proof}

We can now state and prove the equivalent of \reftheorem{color-many-weak} for butterfly-like edges.

\begin{theorem}
    Let $G = (V, E)$ be a graph such that $\Delta(G) \leq 8$ and let $W \subseteq E$ be a set of butterfly-like edges that are pairwise 4-independent. Also, let $\pi$ be a partial 8-edge-coloring of $G$ such that $\pi^{-1}([8]) = E \setminus W$. We can find in time $O(|V|)$ a partial 8-edge-coloring $\sigma$ of $G$ such that $\sigma^{-1}([8]) = (E \setminus W) \cup W'$ for some $W' \subseteq W$ of size $\Omega(|W|)$.
\label{color-many-butterfly-like}
\end{theorem}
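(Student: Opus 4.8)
The plan is to pre-process $W$ so that \reflemma{single-type-color-many} can be applied to a linear-sized, carefully chosen subset. Throughout, set $n=|V|$; since $G$ is planar, $|E|=O(n)$, so it suffices to make every step run in $O(n)$ time. First I would compute $CDS(G,\pi)$ using \reflemma{compute-cds}. Because $W$ is $4$-independent, every edge of $W$ is uncolored while every edge at distance at most $1$ from it other than itself lies outside $W$ and is therefore colored, so \reflemma{types-butterfly} assigns to each $e\in W$ a type $T(e)\in\{0,1_{ab},2_{ab},3_{ab},4_{ab},5_{ab},6_{abcd}\}$, computable in $O(1)$ time per edge from the $CDS$. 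The number of possible types is a constant $C$, so by the pigeonhole principle some type $T$ satisfies $|W_T|\ge |W|/C$, where $W_T=\{e\in W: T(e)=T\}$; we discard $W\setminus W_T$ and keep working with $W_T$, which is still $4$-independent.

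Next, I would eliminate type $2$. If $T=2_{ab}$, apply \reflemma{no-type-2} to $W_T$ (which is $2$-independent) to obtain in $O(|W_T|)=O(n)$ time a partial $8$-edge-coloring with the same set of colored edges in which \emph{every} edge of $W_T$ has type $1_{ab}$; then replace $\pi$ by this coloring, recompute $CDS(G,\pi)$ in $O(n)$, and reset $T:=1_{ab}$. After this step $T\in\{0,1_{ab},3_{ab},4_{ab},5_{ab},6_{abcd}\}$, the edges of $W_T$ are uncolored butterfly-like edges of type $T$ with respect to the current $\pi$, they are pairwise $4$-independent, and $\pi^{-1}([8])=E\setminus W$ still holds.

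The remaining --- and, I expect, only genuinely delicate --- step is to extract $W'\subseteq W_T$ with $|W'|=\Omega(|W_T|)$ that is pairwise $T$-chain independent; then \reflemma{single-type-color-many} applied to $W'$ produces a $\sigma$ with $\sigma^{-1}([8])=\pi^{-1}([8])\cup W'=(E\setminus W)\cup W'$ and $|W'|=\Omega(|W|)$, finishing the proof. If $T=0$, every pair is $T$-chain independent and we may take $W'=W_T$. Otherwise $T$ refers to one colour pair $(a,b)$ (and, when $T=6_{abcd}$, also to $(c,d)$); I would build an auxiliary graph $H$ on vertex set $W_T$ joining $e_1,e_2$ whenever they are $(a,b)$-chain dependent --- and, in the $6_{abcd}$ case, also whenever they are $(c,d)$-chain dependent --- so that any independent set of $H$ is a $T$-chain independent subset of $W_T$. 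The key claim is that $\Delta(H)=O(1)$. To see this, fix $e_1\in W_T$: there are $O(1)$ vertices at distance at most $1$ from an endpoint of $e_1$ (at most $1+\Delta(G)\le 9$ around each endpoint); for each such vertex $v$, the $CDS$ gives in $O(1)$ time the $(a,b)$-chain through $v$, which --- if $v$ is one of its two endpoints --- singles out at most one further vertex $w$ (the other endpoint of that chain). These $O(1)$ vertices $w$ are the only vertices that can play the role of $v_2$ in a witness of an $(a,b)$-chain dependence between $e_1$ and another edge. Now $4$-independence of $W_T$ forces every vertex to be within distance $1$ of endpoints of at most one edge of $W_T$ (otherwise those two edges would be at distance at most $2$), so each $w$ is associated with at most one edge of $W_T$; hence $e_1$ is $(a,b)$-chain dependent with $O(1)$ edges of $W_T$, and with at most twice as many when $(c,d)$-chains are also counted. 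Thus $\Delta(H)=O(1)$, and greedily picking vertices yields an independent set $W'$ of $H$ with $|W'|\ge |W_T|/(\Delta(H)+1)=\Omega(|W_T|)=\Omega(|W|)$.

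Finally, the whole computation runs in $O(n)$: the adjacencies of $H$ are obtained by first marking, for each $e\in W_T$, the $O(1)$ vertices within distance $1$ of an endpoint with a pointer to $e$ --- well defined because $W_T$ is $4$-independent --- and then, for each $e_1$, reading its $O(1)$ candidate partners $w$ from the $CDS$ and following their pointers; this costs $O(|W_T|)$ in total, as does the greedy independent-set computation (since $H$ has bounded degree its number of edges is linear). The single call to \reflemma{single-type-color-many} on $W'$ costs $O(n)$. Combining all parts gives the claimed $O(|V|)$-time algorithm; everything except the constant bound on $\Delta(H)$ is routine bookkeeping.
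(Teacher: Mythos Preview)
Your proof is correct and follows essentially the same approach as the paper's: compute types via $CDS$, pigeonhole onto a single type, convert type $2_{ab}$ to $1_{ab}$ via \reflemma{no-type-2}, greedily extract a $T$-chain independent subset of linear size using the bounded-degree dependency structure, and finish with \reflemma{single-type-color-many}. The only cosmetic difference is that you phrase the greedy step as finding an independent set in an auxiliary bounded-degree graph $H$, whereas the paper describes the same greedy procedure directly with vertex marking; the underlying counting argument (each edge is chain-dependent with $O(1)$ others because there are $O(1)$ nearby vertices, each the endpoint of at most one relevant chain, and $4$-independence pins each chain-endpoint to at most one edge of $W_T$) is identical.
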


\begin{proof}
Let $n = |V|$. Note that $|E|=O(n)$ because $\Delta(G)=O(1)$. Since the edges in $W$ are 4-independent and $\pi^{-1}([8]) = E \setminus W$, all the edges at distance at most 1 from any edge of $W$ are colored. Hence, all the edges in $W$ have some type by \reflemma{types-butterfly}. Moreover, given $CDS(G, \pi)$, which we can compute in time $O(n)$ by \reflemma{compute-cds}, we can find the types of all the edges in $W$ in time $O(|W|) = O(n)$.

Let $T$ be a type that maximizes $|\{e \in W : e \textrm{ has type } T\}|$. The number of types is bounded by a constant, so there are $\Omega(|W|)$ edges of type $T$ in $W$. Denote the set of these edges as $E_T$.
Also, let
\[
T' = \left\{\begin{array}{ll}
1_{ab} &  \textrm{ if } T = 2_{ab} \textrm{ for some } a, b \in [8];\\
T & \textrm{ otherwise.}\\

\end{array} \right.
\]

If $T = 2_{ab}$ for some $a, b \in [8]$, we can modify $\pi$ in time $O(|E_T|)$ so that all the edges in $E_T$ gain type $1_{ab}$ by \reflemma{no-type-2}. Therefore, we can ensure in time $O(n)$ that all the edges in $E_{T'}$ have type $T'$ and $|E_{T'}| = \Omega(|W|)$. In case \reflemma{no-type-2} was applied, we also recompute $CDS(G, \pi)$.

The next step is to find $W' \subseteq E_{T'}$ such that the edges in $W'$ are $T'$-independent and $|W'| = \Omega(|E_{T'}|) = \Omega(|W|)$. We can take edges to $W'$ greedily --- we consider all the edges of $E_{T'}$ in any order and take an edge to $W'$ if and only if it is $T'$-independent with all the edges already in $W'$. We use $CDS(G, \pi)$ to check if an edge is $T'$-independent with all the edges in $W'$. By the definition of $T'$-independence, we only have to check that all chains with appropriate colors starting at distance at most 1 from the edge end at distance more than 1 from all the edges in $W'$. Since $\Delta(G) \leq 8$, the number of chains to check is at most $2 \cdot 2\Delta(G) \leq 32$ and for each chain we can perform the verification in constant time using $CDS(G, \pi)$ if we keep marking all vertices at distance at most 1 from edges taken to $W'$. We are guaranteed to take at least $\lceil |E_{T'}| / 33\rceil$ edges to $W'$ because any edge in $E_{T'}$ must be $T'$-independent with all but at most $32$ other edges in $E_{T'}$.

Finally, we can expand $\pi$ by all the edges in $W'$ in time $O(n)$ by \reflemma{single-type-color-many}.
\end{proof}

\subsection{The algorithm}

Now we collect the ingredients from previous sections to show our main result as follows.

\begin{theorem}
\label{th:main}
    There is an algorithm that runs in time $O(n\log n)$ and finds an 8-edge-coloring of an input $n$-vertex planar graph with maximum degree at most 8. The algorithm does not need a planar embedding.
\end{theorem}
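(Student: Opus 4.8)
The plan is to recurse on the number of edges: each level of the recursion pulls out a constant fraction of the edges, recurses on the rest, and then spends $O(n\log n)$ time to put the removed edges back. At the very start we delete all isolated vertices (they are irrelevant to edge-coloring), so that for the current graph $n:=|V|$ and $m:=|E|$ satisfy $n\le 2m$, while planarity gives $m\le 3n$; hence $m=\Theta(n)$. If $m$ is below a fixed constant we return an $8$-edge-coloring found by brute force over the $8^{O(1)}=O(1)$ candidate colorings — one exists because every planar graph with $\Delta\le 8$ has $\chi'\le 8$ (Vizing's planar theorem and Sanders--Zhao for $\Delta\ge 7$, and $\chi'\le\Delta+1\le 7$ when $\Delta\le 6$).

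For the recursive step, \reftheorem{linear-bound-reducible-edges} gives at least $n/1460$ reducible edges, so at least $n/2920$ are weak or at least $n/2920$ are butterfly-like; fix the larger class $\mathcal E$. We compute $\mathcal E$ in $O(n)$ time, since deciding weakness of $xy$ is a degree count over $N(x)\cup N(y)$ and deciding butterfly-likeness is a search for a butterfly among the $O(1)$ vertices within distance $1$ of $xy$, each $O(1)$ because $\Delta\le 8$. Then we greedily build a maximal set $W\subseteq\mathcal E$ of pairwise $4$-independent edges: scan the edges of $\mathcal E$, add one whenever no earlier choice lies within distance $4$, and mark the $O(\Delta^4)=O(1)$ edges within distance $4$ of each added edge. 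Every added edge forbids only $O(1)$ others, so $|W|=\Omega(|\mathcal E|)=\Omega(n)$, and the scan costs $O(n)$. We recurse on $G\setminus W$, which is planar with maximum degree at most $8$ and has $m-|W|\le(1-\delta)m$ edges for some constant $\delta>0$; the recursive call returns an $8$-edge-coloring of $G\setminus W$, viewed as a partial $8$-edge-coloring $\pi$ of $G$ with $\pi^{-1}([8])=E\setminus W$.

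To reinsert $W$ we apply \reftheorem{color-many-weak} if $\mathcal E$ is a class of weak edges, and \reftheorem{color-many-butterfly-like} if it is a class of butterfly-like edges; the hypotheses hold because $W$ is pairwise $4$-independent (hence also $2$-independent), every edge of $W$ is weak resp.\ butterfly-like \emph{in $G$}, and $\pi$ leaves exactly $W$ uncolored. This colors some $W'\subseteq W$ with $|W'|=\Omega(|W|)$ in time $O(n)$; the still-uncolored edges form a subset of $W$, so they remain pairwise $4$-independent and remain all weak resp.\ all butterfly-like (these depend only on the unchanged graph $G$, not on the coloring), and the same theorem applies again. After $O(\log|W|)=O(\log n)$ such rounds all of $W$ is colored, for a total of $O(n\log n)$ in this phase. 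Writing $T(m)$ for the time on a graph with $m$ edges, the above gives $T(m)=T((1-\delta)m)+O(m\log m)$, which unrolls to $O(m\log m)\sum_{i\ge 0}(1-\delta)^i=O(m\log m)=O(n\log n)$. No planar embedding is ever used: \reftheorem{linear-bound-reducible-edges} only asserts the \emph{existence} of many reducible edges, locating them and the set $W$ is purely combinatorial, and the coloring subroutines act through $CDS(G,\pi)$, built by plain graph traversal in \reflemma{compute-cds}; planarity enters only via $m=O(n)$.

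The genuinely hard parts are already behind us: \reftheorem{linear-bound-reducible-edges} (the discharging argument, whose novelty is the butterfly-like configuration) and \reftheorem{color-many-butterfly-like} (the type-and-$CDS$ machinery reducing many butterfly-like edges at once). What remains is routine — the greedy extraction of a $4$-independent reducible set, verifying that the hypotheses of the two reduction theorems persist from round to round, and the geometric-sum time accounting. The only point deserving care is to keep $n=O(m)$ by discarding isolated vertices at each level, so that the per-level cost is $O(m\log m)$ and the recursion collapses to $O(n\log n)$ rather than $O(n\log^2 n)$.
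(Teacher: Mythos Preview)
Your proposal is correct and follows essentially the same approach as the paper: recursively extract a $4$-independent set of $\Omega(n)$ reducible edges (all weak or all butterfly-like), color the rest, then reinsert via $O(\log n)$ applications of \reftheorem{color-many-weak} or \reftheorem{color-many-butterfly-like}, with the geometric-sum time accounting. The only cosmetic differences are your parametrization by $m$ rather than $n$ and your constant-size brute-force base case versus the paper's empty-graph base case.
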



\begin{proof}
Our algorithm is recursive.
Given $G$, we first remove all isolated vertices from it. If we end up with an empty graph, we return an empty coloring.

Then, we find the set of all weak edges $E_w$ and the set of all butterfly-like edges $E_b$. For any edge, we can decide whether it is weak or butterfly-like in constant time (recall that $\Delta(G)$ is bounded), so we can compute $E_w$ and $E_b$ in time $O(n)$.

Let $R$ be equal to $E_w$ if $|E_w| \geq |E_b|$ and $E_b$ otherwise. By \reftheorem{linear-bound-reducible-edges}, $|R| \geq n/2920$. Next, we find in time $O(n)$ any maximal set $I \subseteq R$ such that edges in $I$ are pairwise 4-independent. We take edges to $I$ greedily --- we consider all the edges in $R$ in any order and take an edge to $I$ if and only if it is 4-independent with all the edges already in $I$. To be able to check 4-independence with all the edges in $I$ in constant time, we keep marking all vertices at distance at most 4 from the edges taken to $I$. Since $\Delta(G) \leq 8$, any edge of $R$ can be at distance at most 4 from $O(1)$ other edges in $R$. Hence, $|I| = \Omega{(|R|)}$.

Next, we find an 8-edge-coloring $\pi$ of $G \setminus I$ recursively. Then, we repeatedly apply \reftheorem{color-many-weak} or \reftheorem{color-many-butterfly-like} to all uncolored edges of $G$ until we color all edges of $G$. In each step, the number of uncolored edges decreases by a constant factor, so there are $O(\log |I|)$ steps. Each of them takes time $O(n)$, which gives time $O(n \log |I|) = O(n \log n)$ in total. For each recursive call, the number of edges decreases by a constant factor $\alpha$, and the number of vertices remaining in the graph (after removing isolated vertices) is upper-bounded by twice the number of edges, so the total time of the algorithm is $O(n \log n)$, since
\[\sum_{i\ge 0} \alpha^i n \log (\alpha^i n) \le \left(\sum_{i\ge 0} \alpha^i\right) n \log n = O(n\log n).\]
\end{proof}

Let us also provide a generalization to graphs of small genus $g$. 

\begin{theorem}
	\label{thm:genus}
	There is an algorithm that runs in time $O((n+g)\log n + 2^{O(g)}g^{O(1)})$ for a given input genus $g$ graph $G$ with maximum degree at most 8 and finds an edge-coloring of $G$ using the optimal number of $\chi'(G)$ colors. The algorithm does not need an embedding.
\end{theorem}

\begin{proof}
We proceed as in the proof of Theorem~\ref{th:main}. Let us point out the differences. Since genus $g$ graphs have $O(n+g)$ edges, computing $E_w$ and $E_b$ takes time $O(n+g)$.
By Theorem~\ref{linear-bound-reducible-edges}, $|R|\ge n/2920-(g-1)/6$.
Note that when $g\ge 2$, for small graphs the right hand side of this inequality may not be positive and then it may happen that $R$ is actually empty and the algorithm does not make any progress in coloring more edges. This is not surprising, because for genus $g\ge 2$, it may happen that $G$ is not 8-edge-colorable.
Thus, when $n/2920-(g-1)/6\le n/5840$, i.e., when $n\le \tfrac{5840}6(g-1)$, we just find an optimal coloring using an exponential-time algorithm, say the vertex coloring algorithm of Bj\"{o}rklund et al.~\cite{DBLP:journals/siamcomp/BjorklundHK09} applied to the line graph of $G$. It runs in time $2^{|E(G)|}|E|^{O(1)} = 2^{O(g)}g^{O(1)}$.
Otherwise, we have $|R|>n/5840$ and we proceed as in the proof of Theorem~\ref{th:main}.
\end{proof}

\section{Further research}
\label{further-research}

One direction to improve the result from this work is to find an algorithm complexity than $o(n \log n)$, ideally $O(n)$. A natural idea is to extend the approach of Cole and Kowalik~\cite{cole-kowalik} to $\Delta = 8$. Their $O(n)$ algorithm subsequently expands the partial $\Delta$-edge-coloring by one edge in constant time. We failed to prove that for $\Delta=8$ there is always an edge reducible in constant time, even though we used computer-assisted case analysis.

Another open problem is finding a subquadratic algorithm for the case $\Delta=7$. We considered whether the approach from this work can also be used for $\Delta=7$, but we realized it is not easy. The first challenge is to show that there are $\Omega(n)$ reducible (perhaps in a different sense) edges. The proof of Sanders and Zhao~\cite{SandersZhaoDelta7} only shows that there is at least one reducible edge when $\Delta = 7$. The second challenge is to find a way to efficiently expand a partial 7-edge-coloring by $\Omega(n)$ reducible edges. Expanding a partial 7-edge-coloring by a single reducible edge in the proof of Sanders and Zhao may look very similar to what we did in the proof of \reflemma{single-type-color-many} --- recoloring only a constant number of edges in a close neighborhood of the edge and kemping a constant number of chains. However, in the proof of Sanders and Zhao, kemping chains that share a color is possible. For example, one may have to kempe $(a, b)$-chain $P_1$ before kemping $(b, c)$-chain $P_2$. Kemping $P_1$ will change $P_2$ if these chains share an edge colored $b$. In particular, $P_2$ may be merged with another $(b, c)$-chain $P_3$. This significantly complicates the proof of \reflemma{single-type-color-many} for the case $\Delta = 7$ because we cannot rely anymore on coherence of edges from $E_T$. For example, some edges of $P_3$ could later be recolored while expanding the 7-edge-coloring by other edges from $E_T$.

\bibliographystyle{plainurl}
\bibliography{planar-edgecol8}

\end{document}